\newcommand{\qw}[1][-1]{\ar @{-} [0,#1]}
\newcommand{\qwx}[1][-1]{\ar @{-} [#1,0]}
\newcommand{\gate}[1]{*+<.6em>{#1} \POS ="i","i"+UR;"i"+UL **\dir{-};"i"+DL **\dir{-};"i"+DR **\dir{-};"i"+UR **\dir{-},"i" \qw}
\newcommand{\control}{*!<0em,.025em>-=-<.2em>{\bullet}}
\newcommand{\ctrl}[1]{\control \qwx[#1] \qw}
\newcommand{\targ}{*+<.02em,.02em>{\xy ="i","i"-<.39em,0em>;"i"+<.39em,0em> **\dir{-}, "i"-<0em,.39em>;"i"+<0em,.39em> **\dir{-},"i"*\xycircle<.4em>{} \endxy} \qw}
\newcommand{\multigate}[2]{*+<1em,.9em>{\hphantom{#2}} \POS [0,0]="i",[0,0].[#1,0]="e",!C *{#2},"e"+UR;"e"+UL **\dir{-};"e"+DL **\dir{-};"e"+DR **\dir{-};"e"+UR **\dir{-},"i" \qw}
\newcommand{\ghost}[1]{*+<1em,.9em>{\hphantom{#1}} \qw}
\newcommand{\Qcircuit}{\xymatrix @*=<0em>}
\newtheorem{definitionenv}{Definition}
\newtheorem{remarkenv}[definitionenv]{Remark}
\newenvironment{remark}{\begin{remarkenv}\rm}{\end{remarkenv}}
\newtheorem{thm}{Theorem}
\newtheorem{mydef}{Definition}
\newtheorem{mylemma}{Lemma}
\newtheorem{myproposition}{Proposition}
\newcommand{\bes} {\begin{subequations}}
\newcommand{\ees} {\end{subequations}}
\newcommand{\bea} {\begin{eqnarray}}
\newcommand{\eea} {\end{eqnarray}}
\newcommand{\beq}{\begin{equation}}
\newcommand{\eeq}{\end{equation}}
\newcommand{\ignore}[1]{}
\def\G{\Gamma}
\def\>{\rangle}
\def\<{\langle}
\def\Tr{\mathrm{Tr}}
\newcommand{\multistate}[2]{*+{\hphantom{#2}} \POS[0,0].[#1,0] !C
  *{#2} \POS[0,0].[#1,0] \drop\frm{}}
\newcommand{\ghoststate}[1]{*+{\hphantom{#1}} }
\newcommand{\ccteq}[1]{\multistate{#1}{=}}
\newcommand{\ccteqg}{\ghoststate{=}}
\begin{document}
\title{A Fault-Tolerant Scheme of Holonomic Quantum Computation on Stabilizer Codes with Robustness to Low-weight Thermal Noise}
\author{Yi-Cong Zheng}

\email{
yicongzh@usc.edu
}
\author{Todd A. Brun}
\email{tbrun@usc.edu}
\affiliation{Department of Electrical Engineering, Center for Quantum Information Science \& Technology, University of Southern California, Los Angeles, California, 90089\\}
\date{\today}
\begin{abstract}
We show an equivalence relation between fault-tolerant circuits for a stabilizer code and fault-tolerant adiabatic processes for holonomic quantum computation (HQC), in the case where quantum information is encoded in the degenerated ground space of the system Hamiltonian. By this equivalence, we can systematically construct a fault-tolerant HQC scheme, which can geometrically  implement a universal set of encoded quantum gates by adiabatically deforming the system Hamiltonian. During this process, quantum information is protected from low weight thermal excitations by an energy gap that does not change with the problem size.
\end{abstract}
\pacs{03.65.Vf, 03.67.Lx, 03.67.Pp}
\maketitle

\section{Introduction}
Quantum computers are superior to classical ones in solving specific difficult problems, yet they are extremely vulnerable to errors during the computation process. It has been shown that if the errors of each type are local, and their rates are below a certain threshold, it is possible to implement quantum algorithms reliably with arbitrarily small error ~\cite{Nielsen:2000:CambridgeUniversityPress,DivencenzoFTPhysRevLett.77.3260,Kitaev:2003:2,QECbook:2013}. These quantum threshold theorems are based on the idea of quantum error correction (QEC).

In addition to QEC, there have also been proposals to deal with noise by designing the ``hardware" to provide inherent robustness. One of such method is holonomic quantum computing (HQC)~\cite{Zanardi:1999:94}---an all-geometric, adiabatic method of computation that uses a non-Abelian generalization of the Berry phase~\cite{Wilczek:1984:2111}. This approach is robust against certain types of errors during the adiabatic evolution~\cite{Solina:robustofHQCPhysRevA.70.042316,Gurdi:robustofHQCPhysRevLett.94.020503,solinas2012stability_HQC} and offers some built-in resilience.

Another method is to use the adiabatic quantum computing (AQC)~\cite{Farhi:0001106,FarhiScience} model instead of the standard quantum circuit model, which slowly drags the ground state of the system to the final Hamiltonian, whose ground state encodes the solution of the problem. AQC would take advantage of the energy gap between the ground state and other excited states to suppress thermal noise when evolution is very slow~\cite{Jordan:2005:052322,TameemNJP:adiabaticMarkovianME}.

The combination of fault-tolerant techniques and HQC was studied in Ref.~\cite{OgyanHolonomicQCPhysRevLett.102.070502,OgyanHolonomicQcPhysRevA.80.022325}, where the system Hamiltonian is an element of the stabilizer group or gauge group. Single qubit or two-qubit unitary operations are realized through continuously deforming the the system Hamiltonian. During this process, the path in the parameter space forms an open loop and results in the desired unitary transformation. After a sequence of such elementary operations, a closed-loop holonomy is obtained in the code space. However, this approach does not protect quantum information from thermal noise since there is no energy gap between the code space and error spaces. Also, while considerable work has been done in \cite{Jordan:2005:052322,LidarTowardFTAdqcPhysRevLett.100.160506},
a fault-tolerant theory for AQC is still lacking. The system's minimal energy gap, which determines the time scale of evolution, scales as an inverse polynomial in the problem size~\cite{StewartEquivalenceADCPhysRevA.71.062314,AriEquivalenceADCPhysRevLett.99.070502}, so that the temperature must be lowered polynomially to prevent thermal excitation.

In this paper, we present a scheme combining advantages of all three methods mentioned previously. First, we show an equivalence relation between fault-tolerant circuits and fault-tolerant adiabatic processes in the case where quantum information is encoded in a code space, which is also the ground space of a system Hamiltonian. Based on this, we present an alternative way to systematically construct a fault-tolerant HQC process that takes advantage of the energy gap between the ground space and other excitation states. Unlike AQC, this gap does not change with the problem size, and we know the exact value of the gap during the process, which greatly enhances the ability to prevent low-weight thermal excitation. With a lower error rate at the physical level of the fault-tolerant scheme, it may help to reduce the number of qubits needed and the frequency of error detection and error correction.

The structure of this paper is as follows. In Sec.~\ref{sec:math_framework}, we review the preliminaries that we will use to formulate our problem. Specifically, after defining HQC in Sec.~\ref{sec:hc_general}, we review the geometrical setting of the holonomic problem in Sec.~\ref{sec:hc_math}, and the basic ideas of fault-tolerant quantum computing in Sec.~\ref{sec:code_and_fault_tolerant}.
We connect fault-tolerant techniques and HQC in  Sec.~\ref{sec:fault_tolerant_hqc}. In Sec.~\ref{sec:scheme}, we describe our method to construct an adiabatic process from a fault-tolerant circuit to implement encoded unitary operations. Then in Sec.~\ref{sec:fault_tolerant}, we prove that our method of constructing encoded unitary operations is fault-tolerant, and discuss how it can realize universal fault-tolerant quantum computation. Several examples are given in Sec.~\ref{sec:examples}. In Sec.~\ref{sec:3qubit}, we show how our scheme works on the simplest 3-qubit repetition code. A less trivial example, of the encoded CNOT gate for the Steane code, is given in Sec.~\ref{sec:Steane's_code}. We summarize our results and conclude in Sec.~\ref{sec:conclusion}.

\section{Preliminaries}\label{sec:math_framework}

\subsection{Holonomic quantum computation}\label{sec:hc_general}

Consider a family of Hamiltonians $\{H_\lambda\}$ on an $N-$dimensional Hilbert space. The point $\lambda$, parametrizing the Hamiltonian, is an element of a manifold $\mathcal {M}$ called the control manifold, and the local coordinates of $\lambda$ are denoted by $\lambda^i\ (1\leq i \leq \textrm{dim}\mathcal{M})$. Assume there are only a fixed number of eigenvalues $\varepsilon_k(\lambda)$ (this is the case we are interested in) and suppose the $n$th eigenvalue $\varepsilon_n(\lambda)$ is $K_n$-fold degenerate for any $\lambda$. The degenerate subspace at $\lambda$ is denoted by $\mathcal{H}_n(\lambda)$. The orthonormal basis vectors of $\mathcal{H}_n(\lambda)$ are denoted by $\{|\phi_\alpha^n;\lambda\>\}$, satisfying
\beq
H_\lambda|\phi_\alpha^n;\lambda\>=\varepsilon_n(\lambda)|\phi_\alpha^n;\lambda\>,
\eeq
and
\beq
\<\phi_\alpha^n;\lambda|\phi_\beta^m;\lambda\>=\delta_{nm}\delta_{\alpha\beta}.
\eeq
Now assume the parameter $\lambda$ is changed adiabatically, which means that
\beq\label{eq:adiabatic_approx}
\left(\varepsilon_n(\lambda(t))-\varepsilon_{n^\prime}(\lambda(t))\right)T\gg1
\eeq
is satisfied for $n\neq n^\prime$ during $0\leq t\leq T$). Suppose the initial state at $t=0$ is an eigenstate $|\psi^n(0)\>=|\phi_\alpha^n;\lambda(0)\>$. The Schr\"{o}dinger equation is
\beq \label{eq:shrodinger}
i\frac{\text{d}}{\text{d}t}|\psi^n(t)\>=H(\lambda(t))|\psi^n(t)\>,
\eeq
whose solution will have the form
\beq \label{eq:solution}
|\psi^n(t)\>=\sum_{\beta=1}^{K_n}|\phi_\beta^n;\lambda(t)\>U_{\beta\alpha}(t).
\eeq
where we have used the adiabatic approximation from Eq.~(\ref{eq:adiabatic_approx}). Substituting Eq.~(\ref{eq:solution}) into Eq.~(\ref{eq:shrodinger}), one finds that $U_{\beta\alpha}$ satisfies
\beq
\begin{split}
\dot{U}_{\beta\alpha}(t)=&-i\varepsilon_n(\lambda(t))U_{\beta\alpha}(t)\\
&-\sum_{\mu}\<\phi_\beta^n;\lambda(t)|\frac{\text{d}}{\text{d}t}|\phi_\mu^n;\lambda(t)\>U_{\mu\alpha}(t).
\end{split}
\eeq
The solution can be expressed as
\beq
U(t)=\exp\left(-i\int_0^t\varepsilon_n(\lambda(s))\text{d}s\right)\mathcal{T}\exp\left(-\int_{0}^{t} A^n(\tau)\text{d}\tau\right),
\eeq
where $\mathcal {T}$ is the time-ordering operator and
\beq\label{eq:WZ_connection}
A^n_{\beta\alpha}(t)=\<\phi^n_{\beta};\lambda(t)|\frac{\text{d}}{\text{d}t}|\phi^n_\alpha;\lambda(t)\>
\eeq
is the Wilczek-Zee (WZ) connection~\cite{Wilczek:1984:2111}. Define the connection
\beq
\mathcal{A}^n_{i,\beta\alpha}(t)=\<\phi^n_{\beta};\lambda(t)|\frac{\partial}{\partial \lambda^i}|\phi^n_\alpha;\lambda(t)\>,
\eeq
through which $U(t)$ can be expressed as
\beq
U(t)=\exp\left(-i\int_0^t\varepsilon_n(\lambda(s))\text{d}s\right)\mathcal{P}\exp\left(-\int_{\lambda(0)}^{\lambda(t)}\sum_i\mathcal{A}^n_i\text{d}\lambda^i\right),
\eeq
where $\mathcal{P}$ is the path-ordering operator. Suppose the path $\lambda(t)$ is a loop $\lambda$ in $\mathcal{M}$ such that $\lambda(0)=\lambda(T)=\lambda_0$. Then after transporting through $\lambda$, states are transformed to
\beq
|\psi^n(T)\>=\sum_{\beta=1}^{K_n}|\psi_\beta^n(0)\>U_{\beta\alpha}(T).
\eeq
The unitary matrix
\beq
\Gamma_\lambda
=\mathcal{P}\exp\left(-\oint_\lambda\sum_i\mathcal{A}^n_i\text{d}\lambda^i\right)
\eeq
is called the holonomy associated with the loop $\lambda(t)$. $\Gamma_\lambda$ is a purely geometric object, and is independent of the parametrization of the path. Note that for a given $\Gamma_\lambda$, there exist infinitely many paths $\lambda$. Given a path $\lambda$, to find the holonomy is easy. However, the inverse problem---given a holonomy, to find the the proper path $\lambda$---is in general not trivial at all. In the rest of the paper, we will discuss how to find a proper path $\lambda$ to realize a certain holonomy in the code space, and thus perform an encoded quantum gate operation.

\subsection{Formulation of geometric problem}\label{sec:hc_math}
The definition introduced in Sec.~\ref{sec:hc_general} is not easy to use for our purpose. In this section, we outline the geometric setting of the holonomic problem as described in Refs.~\cite{Tanimura2004199,tanimura:022101}, which gives a clearer picture and more concise formulation of the problem. We focus on the ground space of the Hamiltonian to simplify the problem. However, this formalism is quite general, and can be applied to any eigenspace of the system Hamiltonian.

Suppose we have a family of Hamiltonians acting on the Hilbert space $\mathbb{C}^N$, and the ground state of each Hamiltonian is $K$-fold degenerate ($K<N$). The natural mathematical setting to describe this system is the principal bundle $(S_{N,K}(\mathbb{C}), G_{N,K}(\mathbb{C}),\pi, {\rm U}(K))$, which consists of the Stiefel manifold $S_{N,K}(\mathbb{C})$, the Grassmann manifold $G_{N,K}(\mathbb{C})$, the projection map $\pi:S_{N,K}(\mathbb{C})\rightarrow G_{N,K}(\mathbb{C})$, and the unitary structure group ${\rm U}(K)$. We explain the meaning of these mathematical objects below.

The Stiefel manifold is defined as:
\beq
S_{N,K}(\mathbb{C})=\{V\in M(N,K;\mathbb{C})|V^\dag V=I_K\},
\eeq
where $M(N,K;\mathbb{C})$ is the set of $N\times K$ complex matrices and $I_K$ is the $K-$dimensional unit matrix. Physically, each column of $V\in S_{N,K}(\mathbb{C})$ can be viewed as a normalized state in $\mathbb{C}^N$, and $V$ can be viewed as an orthonormal set of $K$ basis of the ground space of Hamiltonian. Since we have freedom to transfer from one orthnormal basis of ground space to another through unitary transformation, we can then define the unitary group $\textrm{U}(K)$ that acts on $S_{N,K}(\mathbb{C})$ from the right:
\beq\label{eq:gauge_transform}
S_{N,K}(\mathbb{C})\times\textrm{U}(K)\rightarrow S_{N,K}(\mathbb{C}), \quad (V,h)\mapsto Vh,
\eeq
by the matrix product of $V$ and $h$. $V$ and $Vh$ can be viewed as two
different orthonormal basis corresponding to the ground space.

During the adiabatic evolution, the ground space of the Hamiltonian may change. The ground space can be viewed as a $K$-dimensional hyperplane in $\mathbb{C}^N$. So we introduce the Grassmann manifold in $\mathbb{C}^N$:
\beq
G_{N,K}(\mathbb{C})=\{P\in M(N,N;\mathbb{C})|P^2=P,P^\dag=P,\text{Tr}P=K\},
\eeq
where $P$ is a projection operator onto the hyperplane in $\mathbb{C}^N$, and the condition $\text{Tr}P=K$ indicates that the hyperplane is $K$-dimensional. In our scenario, $P\in G_{N,K}(\mathbb{C})$ can be regarded as the projector onto the $K$-dimensional ground space of the Hamiltonian.

The relationship of the orthonormal basis $V$ and ground space $P$ can be seen as follows. We define the projection map $\pi:S_{N,K}(\mathbb{C})\rightarrow G_{N,K}(\mathbb{C})$ as
\beq
\pi:V\mapsto P:=VV^\dag.
\eeq
So, given an orthonormal basis, we can obtain the corresponding ground space projector.
We can check that the basis $V$ and basis $Vh$ with $h\in \text{U}(K)$ belong to the same ground space, since
\beq\label{eq:projection}
\pi(Vh)=(Vh)(Vh)^\dag=Vhh^\dag  V^\dag = VV^\dag = \pi(V).
\eeq


In our scenario of HQC, we want to transform the ground space adiabatically during the procedure. To formulate such a process, we also define the left action of the unitary group ${\rm U}(N)$ on both $S_{N,K}(\mathbb{C})$ and $G_{N,K}(\mathbb{C})$ by the matrix product:
\beq\label{eq:basis_transform}
{\rm U}(N)\times S_{N,K}(\mathbb{C})\rightarrow S_{N,K}(\mathbb{C}),\quad (g,V)\mapsto gV,
\eeq
and
\beq\label{eq:space_transform}
{\rm U}(N)\times G_{N,K}(\mathbb{C})\rightarrow G_{N,K}(\mathbb{C}),\quad (g,P)\mapsto gPg^\dag.
\eeq
It is easy to see that $\pi(gV)=g\pi(V)g^\dag$. This action is transitive: there is a $g\in{\rm U}(N)$ for any $V,V^\prime\in S_{N,K}(\mathbb{C})$ such that $V^\prime=gV$. There is also a $g\in{\rm U}(N)$ for any $P,P^\prime\in G_{N,K}(\mathbb{C})$ such that $P^\prime=gPg^\dag$. So this action is sufficient to describe any ground space transformation.



The canonical connection form on $S_{N,K}(\mathbb{C})$
is defined as a $\mathbbm{u}(K)$-valued one-form on $G_{N,K}(\mathbb{C})$:
%
\beq
A=V(P)^\dag \text{d}V(P),
\eeq
which is a generalization of the WZ connection in Eq.~(\ref{eq:WZ_connection}). This is the unique connection that is invariant under the transformation in Eq.~(\ref{eq:gauge_transform}):
\beq
\begin{split}
\tilde{A}=&h^\dag V(P)^\dag \text{d}\left(V(P)h\right)\\
=&h^\dag A h + h^\dag \text{d}h,
\end{split}
\eeq
which can be recognized as a gauge transformation.

Now we apply this formalism to the system dynamic of HQC. The state vector $\psi(t)\in\mathbb{C}^N$ evolves according to the Schr\"{o}dinger equation,
\beq\label{eq:shrodinger_2}
i\frac{\text{d}}{\text{d}t}\psi(t)=H(t)\psi(t).
\eeq
The Hamiltonian has a spectral decomposition,
\beq
H(t)=\sum_{l=0}^L\varepsilon_l(t)P_l(t),
\eeq
with projection operators $P_l(t)$. Therefore, the set of energy eigenvalues $(\varepsilon_0(t),\ldots,\varepsilon_L(t))$ and orthogonal projectors $(P_0(t),\ldots, P_l(t))$ encodes the information of the control parameters of the system. For the ground space, without loss of generality, the energy is assumed to be zero: $\varepsilon_0(t)=0$. We write $P_0(t)$ as $P(t)$ for simplicity. Suppose the degeneracy $K=\text{Tr}P(t)$ is constant. For each $t$, there exists $V(t)\in S_{N,K}(\mathbb{C})$ such that $P(t)=V(t)V^\dag(t)$. By the adiabatic approximation, we can substitute for $\psi(t)\in\mathbb{C}^N$ a reduced state vector $\phi(t)\in \mathbb{C}^K$:
\beq
\psi(t)=V(t)\phi(t).
\eeq
Since $H(t)\psi(t)=\varepsilon_0(t)\psi(t)=0$, the Schr\"{o}dinger equation ~(\ref{eq:shrodinger_2}) becomes
\beq
\frac{\text{d}\phi}{\text{\text{d}}t}+V^\dag\frac{\text{d}V}{\text{d}t}\phi(t)=0,
\eeq
and the solution can be represented formally as
\beq
\phi(t)=\mathcal {P}\exp\left(-\int V^\dag \text{d}V\right)\phi(0).
\eeq
Therefore, $\psi(t)$ can be written
\beq\label{eq:adiabatic_evolution}
\psi(t)=V(t)\mathcal{P}\exp\left(-\int V^\dag \text{d}V\right)V^\dag(0)\psi(0).
\eeq
In particular, if the system comes back to its initial point, as $P(T)=P(0)$, the holonomy $\Gamma\in\text{U}(K)$ is defined as
\beq\label{eq:holonomy}
\Gamma=V^\dag(0) V(T)\mathcal{P}\exp\left(-\int V^\dag \text{d}V
\right),
\eeq
and the final state is
\beq
\begin{split}
\psi(T)&=V(0)\Gamma\phi(0).\\
\end{split}
\eeq
According to the formula above, an operation $\Gamma\in {\rm U}(K)$ is applied to the ground space.

If the condition
\beq\label{eq:horizontal_condition}
V^\dag\cdot\frac{\text{d}V}{\text{d}t}=0
\eeq
is satisfied for all $t$, the curve $V(t)$ in $S_{N,K}(\mathbb{C})$ is called a horizontal lift of the curve $P(t)=\pi(V(t))$ in $G_{N,K}(\mathbb{C})$.Then the holonomy ~(\ref{eq:holonomy}) is greatly simplified to
\beq\label{eq:horizontal_holonomy}
\Gamma=V^\dag(0)\cdot V(T)\in\text{U}(K).
\eeq

Now we are ready to reformulate the problem stated at the end of Sec.~\ref{sec:hc_general}. Given a desired unitary operation  $U_{\text{op}}\in \text{U}(K)$ and a fixed initial point $P(0)\in G_{N,K}(\mathbb{C})$, we want to find a loop $P(t)\in G_{N,K}(\mathbb{C})$ with base points $P(0)=P(T)$ whose horizontal lift $V(t)\in S_{N,K}(\mathbb{C})$ produces the holonomy $\G = U_{\text{op}}$ according to Eq.~(\ref{eq:horizontal_holonomy}). In Sec.~\ref{sec:scheme}, we will discuss in detail how to find such a loop $P(t)$ whose horizontal lift gives the desired holonomy in the code space.

\begin{figure}[!htp]
\includegraphics[width=80mm]{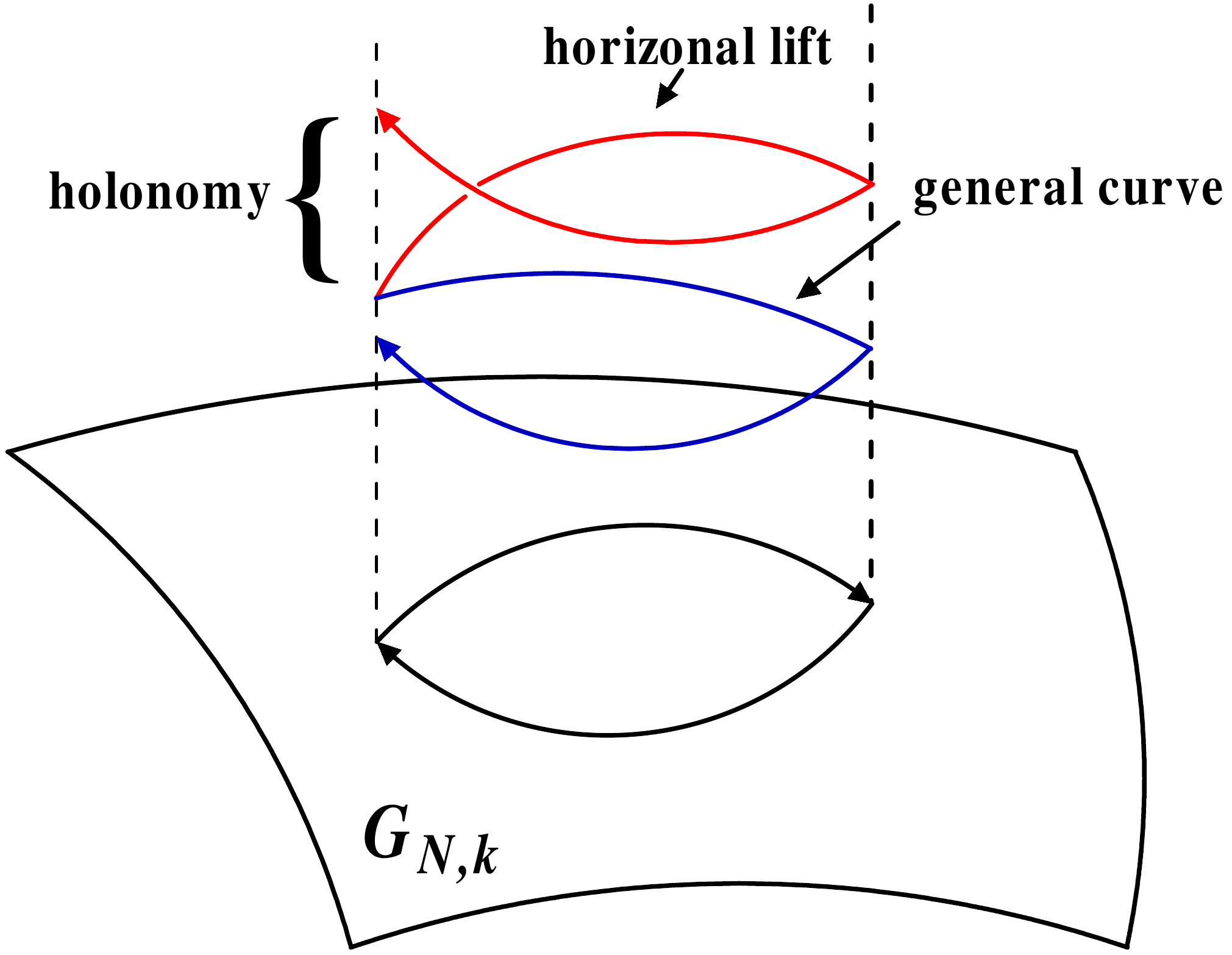}
\caption{\label{Fig:horizontal_lift}Horizontal lift as a specified curve in $S_{N,K}(\mathbb{C})$ whose projection is $P(t)$. The initial condition $V(0)$ becomes $V(T)$, which is generally different from $V(0)$. The difference is the holonomy.}
\end{figure}
A visualization of a horizontal lift is shown in Fig.~\ref{Fig:horizontal_lift}. Without loss of generality, we can always restrict ourselves to the case where $P(t)$ has the form
\beq
P(t)=U(t,0)P(0)U^{\dag}(t,0)=U(t,0)v_0v_0^\dag U^{\dag}(t,0),
\eeq
for some smooth $U(t,0)\in {\rm U}(N)$ according to Eq.~(\ref{eq:space_transform}). Note that, $U(t,0)$ should be chosen such that, at any time $t$
\beq
U(t+\tau,t)P(t)U^{\dag}(t+\tau,t)\neq P(t),
\eeq
for some neighborhood of $t$. This condition can also be stated as
\beq\label{eq:eff_evolution}
\left[\frac{\partial}{\partial\tau}U(t+\tau,t)|_{\tau=0},P(t)\right]\neq 0.
\eeq
The case where Eq.~(\ref{eq:eff_evolution}) equals 0 is allowed only at a finite number of points in $[0,T]$. The horizontal curve should satisfy the following set of equations:
\beq\label{eq:horizontal_equation_group}
\begin{split}
    V^\dag\cdot\frac{\text{d}V}{\text{d}t}&=0, \\
    P(t)=V(t)V^\dag(t)&= U(t,0)v_0v_0^\dag U^{\dag}(t,0).
\end{split}
\eeq
The general solution to these equations can
be written as:
\beq\label{eq:general_curve}
V(t)=U(t,0)v_0h(t,0)
\eeq
for some $h(t,0)\in {\rm U}(K)$. Substituting Eq.~(\ref{eq:general_curve}) into Eq.~(\ref{eq:horizontal_equation_group})
we get:
\beq\label{eq:ht_evolution}
\dot{h}(t,0)=-v_0^\dag U^{\dag}(t,0)\dot{U}(t,0)v_0 h(t,0).
\eeq

A well known result of differential geometry about the uniqueness of a horizontal lift curve~\cite{Nakahara:2003:IOP} can now be directly proved in this specific scenario, which will be used later.
\begin{mylemma}\label{lemma:horiziontal_uniqueness}
Let $P:[0,T]\rightarrow G_{N,K}(\mathbb{C})$ be a curve in $G_{N,K}(\mathbb{C})$ and let $v_0\in\pi^{-1}(P(0))$. Then there exists a unique horizontal lift $V(t)$ in $S_{N,K}(\mathbb{C})$ such that $V(0)=v_0$ .
\end{mylemma}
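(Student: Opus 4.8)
The plan is to reduce the lemma to a standard existence-and-uniqueness statement for linear ordinary differential equations. The key observation, already set up in the excerpt, is that any curve $P(t)$ in $G_{N,K}(\mathbb{C})$ can be written as $P(t)=U(t,0)P(0)U^\dag(t,0)$ for a smooth $U(t,0)\in{\rm U}(N)$ with $U(0,0)=I_N$; this follows from transitivity of the left ${\rm U}(N)$-action in Eq.~(\ref{eq:space_transform}) together with a smooth-selection argument (the bundle is locally trivial, so over a partition of $[0,T]$ into small subintervals one can choose such a $U$ smoothly and patch). First I would fix such a $U(t,0)$ and $v_0\in\pi^{-1}(P(0))$, i.e.\ $v_0\in S_{N,K}(\mathbb{C})$ with $v_0v_0^\dag=P(0)$. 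Then I would write the candidate lift in the form of Eq.~(\ref{eq:general_curve}), $V(t)=U(t,0)v_0h(t,0)$ with $h(t,0)\in{\rm U}(K)$ and $h(0,0)=I_K$, which automatically satisfies $\pi(V(t))=P(t)$ by Eq.~(\ref{eq:projection}) and $V(0)=v_0$.

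Next I would impose the horizontality condition $V^\dag\,\dot V=0$. Substituting the ansatz and using $v_0^\dag v_0=I_K$, a direct computation gives exactly Eq.~(\ref{eq:ht_evolution}),
\beq
\dot h(t,0)=-\,v_0^\dag U^\dag(t,0)\dot U(t,0)\,v_0\,h(t,0),
\eeq
a linear ODE for $h$ with continuous (indeed smooth) coefficient matrix $B(t):=-v_0^\dag U^\dag(t,0)\dot U(t,0)v_0\in M(K,K;\mathbb{C})$. By the Picard–Lindel\"of theorem there is a unique global solution $h(t,0)$ on $[0,T]$ with $h(0,0)=I_K$. I would then check that this $h$ stays in ${\rm U}(K)$: since $U^\dag\dot U$ is anti-Hermitian (differentiate $U^\dag U=I_N$), $B(t)$ is anti-Hermitian, so $\frac{d}{dt}(h^\dag h)=h^\dag(B^\dag+B)h=0$, hence $h^\dag h\equiv I_K$. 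This shows existence of a horizontal lift with the prescribed initial value.

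For uniqueness, suppose $\tilde V(t)$ is any horizontal lift of $P(t)$ with $\tilde V(0)=v_0$. Since $\pi(\tilde V(t))=P(t)=\pi(U(t,0)v_0)$ and the fibres of $\pi$ are exactly the ${\rm U}(K)$-orbits (Eq.~(\ref{eq:projection}) and the transitivity remarks), there is a curve $\tilde h(t)\in{\rm U}(K)$ with $\tilde V(t)=U(t,0)v_0\tilde h(t)$; smoothness of $\tilde h(t)=\big(U(t,0)v_0\big)^\dag\tilde V(t)$ follows because $v_0^\dag U^\dag(t,0)$ is a smooth left inverse of $U(t,0)v_0$. Imposing $\tilde V^\dag\dot{\tilde V}=0$ forces $\tilde h$ to satisfy the same ODE (\ref{eq:ht_evolution}) with $\tilde h(0)=I_K$, so by uniqueness of ODE solutions $\tilde h\equiv h$ and therefore $\tilde V\equiv V$. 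This completes the argument.

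The only genuinely nontrivial point is the first one: producing a \emph{globally smooth} $U(t,0)\in{\rm U}(N)$ realizing $P(t)=U(t,0)P(0)U^\dag(t,0)$ on all of $[0,T]$. Pointwise transitivity is immediate, but smoothness in $t$ requires either the local triviality of the principal bundle $(S_{N,K},G_{N,K},\pi,{\rm U}(K))$ — subdividing $[0,T]$ so that $P$ stays in one trivializing chart and concatenating — or, more directly, solving $\dot U=-[\,\dot P\,,P\,]\,U$ (the Kato parallel-transport equation) with $U(0,0)=I_N$, which is again a linear ODE and automatically gives $UP(0)U^\dag=P$ for all $t$. I would use the latter route, since it keeps the entire proof within the elementary ODE framework and makes the smoothness manifest. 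Everything else is the routine linear-ODE bookkeeping sketched above.
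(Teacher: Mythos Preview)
Your proposal is correct and follows essentially the same route as the paper: write $V(t)=U(t,0)v_0h(t,0)$, reduce horizontality to the linear ODE~(\ref{eq:ht_evolution}) for $h$, observe that the generator $v_0^\dag U^\dag\dot U\,v_0$ is anti-Hermitian so $h$ stays in ${\rm U}(K)$, and invoke existence/uniqueness for linear ODEs (the paper simply writes the path-ordered exponential). Your treatment is in fact more complete than the paper's, since you spell out the uniqueness step at the level of lifts and address the global smooth choice of $U(t,0)$, which the paper takes as given from the surrounding setup; one small correction: the Kato transport equation should read $\dot U=[\dot P,P]\,U$ (no minus sign), as one checks from $P\dot P P=0$ and $\dot P P+P\dot P=\dot P$.
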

\begin{proof}
It's easy to show that $v_0^\dag U^{-1}(t,0)\dot{U}(t,0)v_0$ is anti-Hermitian, so $h(t,0)\in {\rm U}(K)$ for all $t$. Define $V^\prime(t)=U(t,0)V(0)$ to be a particular curve in principal bundle that gives a corresponding WZ connection $A^\prime=V^{\prime\dag} \text{d}V^\prime$.  With initial
condition $h(0,0)=I_K$, the solution of Eq.~(\ref{eq:ht_evolution}) can be written as:
\beq
h(t,0)=\mathcal{P}\exp\left(-\int A^\prime\right),
\eeq
and hence there exists an unique horizontal lift $V(t)$.
\end{proof}

\subsection{Stabilizer codes and fault-tolerant computation}\label{sec:code_and_fault_tolerant}
A quantum error-correcting code is formally defined as a subspace $\mathcal{C}$ of some larger Hilbert space. A necessary and sufficient condition for a set of errors $\{E_i\}$ to be correctable is \cite{Nielsen:2000:CambridgeUniversityPress,QECbook:2013}:
\beq\label{eq:error_correction_condition}
PE_i^\dag E_jP=\alpha_{ij}P, \quad \forall i,j,
\eeq
for some Hermitian matrix $\alpha$. Here $P$ is the projector onto $\mathcal{C}$. Since any linear combination of $\{E_i\}$ is also correctable, we define
\beq
\mathcal{E}=\text{Span}\{E_i\}
\eeq
to be a correctable error set for code $\mathcal{C}$.

The codes we are interested in are the stabilizer codes \cite{Gottesman:9705052}. We briefly review the formalism of stabilizer codes. Let $G_n$ be the Pauli group acting on $n$ qubits. An Abelian subgroup $\mathcal{S}$ of $G_n$ is called a stabilizer group if $-I\nin\mathcal{S}$. The stabilizer group defines a subspace of the $n$-qubit Hilbert space by
\beq
\mathcal{C}=\{|\psi\>:S|\psi\>=|\psi\> \textrm{ for all } S\in \mathcal{S}\}.
\eeq
This $\mathcal{C}$ is called the code space. $\mathcal{C}$ is nonzero since $-I\nin \mathcal{S}$. A state in $\mathcal{C}$ is called a codeword. This subspace is the simultaneous +1 eigenspace of the stabilizer group. If the subspace has dimension $2^k$ ($k$ logical qubits), the stabilizer code can be specified by $n-k$ commuting stabilizer generators, which are elements of $G_n$. The group $\mathcal{S}$ can be represented by these stabilizer generators: $\mathcal{S}=\<\{S_j\}\>$. All stabilizer codes can be characterized by three parameters $[[n,k,d]]$, where $d$ is the minimum distance of the code, which is equal to the
minimum weight of all nontrivial elements in the normalizer group of $\mathcal{S}$.

With the use of stabilizer codes, it is possible to build a quantum processor that is fault-tolerant~\cite{Nielsen:2000:CambridgeUniversityPress,QECbook:2013,Gottesman:9705052}.
A quantum information processor is called fault-tolerant if the information is encoded in a quantum error-correcting code at all times during the procedure, and a failure at any point in the procedure can only propagate to a small number of qubits, so that error correction can remove the errors.
It has been shown that fault-tolerant computation is possible on any stabilizer code~\cite{QECbook:2013,Gottesman:9705052} for some error model. Typically, there are three elementary quantum ``gadgets": encoded state preparation, encoded unitary operations and encoded state measurement. Through enlarging or concatenating the fault-tolerant gadgets, a computation can achieve arbitrary accuracy, if the error rate is low enough~\cite{QECbook:2013}. Encoded Clifford unitary operations play a key role in fault-tolerant computation, since for most proposed schemes of fault-tolerant quantum computation, like concatenation of the Steane code~\cite{Nielsen:2000:CambridgeUniversityPress}, C4 code~\cite{KnillFTNature} or surface code~\cite{Folwer2012PhysRevA.86.032324}, we can prepare encoded non-Clifford magic states using techniques like state distillation, which can be implemented by encoded Clifford unitary operations. So we will focus on encoded Clifford operations and their holonomic implementation.

\begin{figure}[!ht]
\includegraphics[width=80mm]{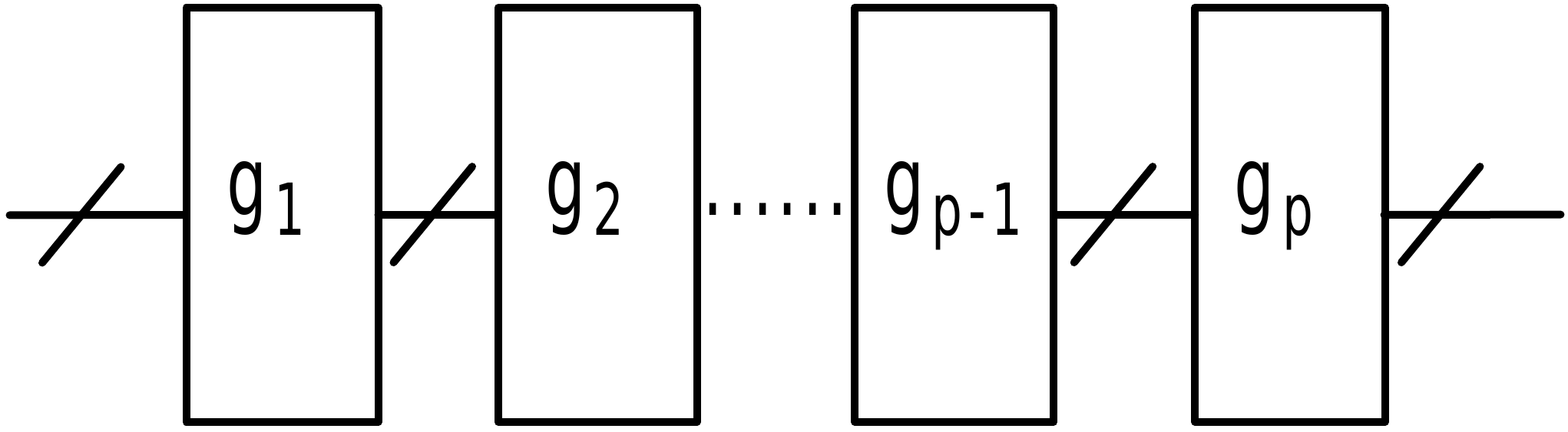}
\caption{\label{Fig:circuit} A logical unitary quantum operation is realized as a series of quantum gates from a universal set of gates in the circuit model.}
\end{figure}
In the standard circuit model, an encoded unitary operation can be realized by a series of quantum gates, say $p$ gates chosen from a universal set of gates as shown in Fig.~\ref{Fig:circuit}. Commonly, the universal set of gates $\mathscr{U}_1=\{\text{Hadamard}, \text{CNOT}, S, \pi/8\}$ is used to describe the circuit, which is good for certain fault-tolerant schemes. Here, we choose another universal set:
\beq
\mathscr{U}_2=\Big\{R^x=\exp\left(-i\frac{\pi}{4}X\right), R^{zz}=\exp\left(i\frac{\pi}{4}Z\otimes Z\right),
S, \pi/8\Big\},
\eeq
which proves to be much more suitable for our adiabatic scheme. Errors can occur anytime during the process, both between and during gate operations. Noisy gates are always equivalent to a perfect gate followed by an error operator, so we can just focus on the errors occurring between the gates. If an error $E^q\in\mathcal{E}$ occurs between gates $q-1$ and $q$, it will propagate to
\beq
E^{q\prime}=\prod_{l=q}^p g_{p+q-l}\cdot E^q\cdot\left(\prod_{l=q}^p g_{p+q-l}\right)^\dag.
\eeq
If a circuit is fault-tolerant, we can suppose that $E^{q\prime}$ would be still in the same correctable error set $\mathcal{E}$. According to this observation, we give a generalized definition of a fault-tolerant circuit for a code $\mathcal{C}$:
\begin{mydef}\label{def:fault_tolerant_circuit}
Given a code $\mathcal{C}$ and a particular correctable error set $\mathcal{E}$ for this code, a circuit $\mathcal{G}$ that realizes an encoded unitary operation is called a fault-tolerant circuit for $\mathcal{C}$ if for any $1\leq q< p$, $U_{qp}=\prod_{l=q}^p g_{p+q-l}$ maps any subset of $\mathcal{E}$ to another subset of $\mathcal{E}$.
\end{mydef}
This definition of a fault-tolerant circuit may be, however, too strong. In practice, it may be very difficult to find such a code and corresponding circuit. For a practical error model, strongly correlated errors happen with much lower probability than weakly correlated or local ones, so we will focus on local errors. For example, if our codes are stabilizer codes, the error set $\mathcal{E}_{\text{local}}$ can be spanned by Pauli operators with weight less than $\lfloor\frac{d-1}{2}\rfloor$, which occur with relatively high probability. If we limit ourselves to such a high-probability correctable local error set $\mathcal{E}_{\text{local}}$, then we get a weaker version of the definition of a fault-tolerant circuit:
\begin{mydef}
Given a stabilizer code [[n,k,d]] with a correctable error set $\mathcal{E}$, and there is a high-probability local error set $\mathcal{E}_{\text{local}}\subset\mathcal{E}$, a circuit $\mathcal{G}$ that realizes an encoded unitary operation is called a fault-tolerant circuit for this code if for any $1\leq l< p$, $U_{qp}=\prod_{l=q}^p g_{p+q-l}$ maps $\mathcal{E}_{\text{local}}$ to some subset of $\mathcal{E}$.
\end{mydef}
\begin{remark}
According to this definition, the encoded fault-tolerant unitary circuit does not necessarily need to be transversal, although the reverse is always true. If a circuit built of gates from $\mathscr{U}_1$ is fault-tolerant, then we can decompose its gates into gates from $\mathscr{U}_2$, and the new circuit we obtain is also fault-tolerant. So, in the rest of the paper, we assume that the given circuits are composed of gates from $\mathscr{U}_2$.
\end{remark}
\begin{remark}
We should mention here that in the following discussion, we only consider  circuits that contain no $\pi/8$ gates. In other words, we limit ourselves to Clifford circuits, since non-Clifford circuits will cause tremendous complexity. This restriction will be further discussed in Sec.~\ref{sec:fault_tolerant}. Fortunately, fault-tolerant encoded Clifford operations for stabilizer codes are made of Clifford circuits, that do \emph{not} contain $\pi/8$ gates. Encoded non-Clifford operations usually do contain $\pi/8$ gates. We will not directly implement encoded non-Clifford operations, but instead make use of magic state distillation, so this is not a serious restriction.
\end{remark}

\section{Fault-tolerant holonomic quantum computation}\label{sec:fault_tolerant_hqc}
To combine the advantages of holonomic quantum computation with fault-tolerant computation techniques, the basic idea is to obtain a holonomy on the code space, which is the ground space of the system Hamiltonian, during an adiabatic evolution. One must make sure that the encoded quantum information is protected by a suitable error-correcting code throughout the Hamiltonian deformation. For simplicity, we assume that error correction is applied at the end of the cyclic adiabatic evolution. However, this may not necessarily be true in practice. We require that an error occurring during the deformation be correctable at the end:
\begin{myproposition}\label{prop_correctable_error}
Given a code $\mathcal{C}$ with a correctable error set $\mathcal{E}$, suppose the initial state is $|\psi(0)\> \in \mathcal{C}$, and the deformation of the Hamiltonian is adiabatic. Then, in general, each eigenspace will undergo some transformation. Given a desired encoded operation (in our case, a holonomy) $\Omega_g$ on the code space,  suppose a series of errors $\{E^{t_i}\}$ occur at times $t_i$ during the evolution. Then $\{E^{t_i}\}$ is correctable only if the final state $|\psi(T)\>\propto E^f\Omega_g|\psi(0)\>$, for some $E^f\in \mathcal{E}$.
\end{myproposition}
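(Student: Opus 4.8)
The plan is to argue the necessity directly from the structure of the final error-correction step, which by hypothesis is the standard recovery adapted to the correctable set $\mathcal{E}$ and is applied once, at $t=T$. First I would invoke condition~(\ref{eq:error_correction_condition}): diagonalizing the Hermitian matrix $\alpha$ produces a basis $\{F_a\}$ of $\mathcal{E}$ with $PF_a^\dagger F_bP=d_a\delta_{ab}P$, so that the subspaces $F_a\mathcal{C}$ with $d_a\neq0$ are mutually orthogonal ``syndrome subspaces.'' The recovery channel $\mathcal{R}$ then consists of the projective measurement $\{\Pi_a\}$ onto these subspaces, together with the projector $\Pi_\perp$ onto the orthogonal complement of $\bigoplus_aF_a\mathcal{C}$, followed --- conditioned on outcome $a$ --- by the (suitably normalized) partial isometry $F_a^\dagger$ carrying $F_a\mathcal{C}$ back to $\mathcal{C}$. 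The statement ``$\{E^{t_i}\}$ is correctable'' means precisely that feeding the actual final vector $|\psi(T)\>$ into $\mathcal{R}$ deterministically returns the target $\Omega_g|\psi(0)\>\in\mathcal{C}$ (up to normalization).

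Next I would read off the two constraints this forces. Write $|\psi(T)\>=\sum_a c_a\,F_a|\chi_a\>+|\eta\>$, with $|\chi_a\>\in\mathcal{C}$ and $|\eta\>$ orthogonal to every $F_a\mathcal{C}$ (hence to $\mathcal{C}$ itself). If $|\eta\>\neq0$, the syndrome measurement returns the ``$\perp$'' outcome with nonzero probability, on which branch the standard recovery leaves a state lying outside $\mathcal{C}$ and therefore not proportional to $\Omega_g|\psi(0)\>$; this contradicts correctability, so $|\eta\>=0$. For each $a$ with $c_a\neq0$, outcome $a$ occurs with nonzero probability and leaves the post-recovery state $\propto|\chi_a\>$; since the output must equal the pure state $\Omega_g|\psi(0)\>\<\psi(0)|\Omega_g^\dagger$, every such $|\chi_a\>$ must be $\propto\Omega_g|\psi(0)\>$. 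Collecting terms, $|\psi(T)\>=\bigl(\sum_a c_a'F_a\bigr)\Omega_g|\psi(0)\>$, and since $\sum_a c_a'F_a\in\text{Span}\{F_a\}=\mathcal{E}$, the choice $E^f:=\sum_a c_a'F_a$ gives $|\psi(T)\>\propto E^f\Omega_g|\psi(0)\>$ with $E^f\in\mathcal{E}$, as claimed. Note that the intermediate errors $\{E^{t_i}\}$ and the adiabatic propagation between them enter only through the single vector $|\psi(T)\>$ they produce; one may, if one wishes, also write $|\psi(T)\>$ explicitly as an alternating product of adiabatic evolution operators and the $E^{t_i}$, but the argument does not require this.

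I expect the main delicate point to be the bookkeeping around a degenerate $\alpha$: some linear combinations of the $E_i$ may annihilate $\mathcal{C}$, so the $F_a$ with $d_a=0$ must be dropped and the decomposition of $|\psi(T)\>$ taken over the surviving syndrome subspaces only; this is harmless, since $E^f$ is pinned down only modulo operators vanishing on $\mathcal{C}$, and any representative still lies in $\mathcal{E}$. The second point worth stating carefully (rather than ``proving'') is the meaning of correctability used here: the recovery map is fixed in advance --- it depends on $\mathcal{C}$ and $\mathcal{E}$, not on which particular $E^{t_i}$ occurred --- and must output $\Omega_g|\psi(0)\>$ with certainty. This is what makes the ``only if'' bite, and it matches the notion of fault tolerance adopted around Def.~\ref{def:fault_tolerant_circuit}. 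Everything else is just linearity of $\mathcal{R}$ across the syndrome-labelled branches.
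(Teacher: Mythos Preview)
The paper does not actually prove this proposition. After stating it, the text only remarks that the empty-error case is obvious and then immediately moves on to Definition~\ref{def:adiabatic_fault_tolerance}; the proposition functions as a definitional framing of what ``correctable at the end'' should mean for an adiabatic process, rather than as a theorem derived from some prior notion. Your write-up therefore supplies more than the paper does.

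Your argument is sound under the interpretation you spell out: fixing the standard recovery $\mathcal{R}$ associated with $(\mathcal{C},\mathcal{E})$ and demanding that $\mathcal{R}$ return $\Omega_g|\psi(0)\>$ with certainty. Diagonalizing $\alpha$, decomposing $|\psi(T)\>$ over the syndrome subspaces $F_a\mathcal{C}$ plus their complement, and then reading off $|\eta\>=0$ and $|\chi_a\>\propto\Omega_g|\psi(0)\>$ from the branch-by-branch success condition is exactly the right mechanism, and your handling of the degenerate $d_a=0$ sector is correct. The one point worth flagging is that the proposition, as written in the paper, is tied to a single fixed $|\psi(0)\>$; your dismissal of the $\perp$ branch tacitly assumes the recovery cannot accidentally land on $\Omega_g|\psi(0)\>$ from outside $\bigoplus_a F_a\mathcal{C}$, which is automatic if one reads ``correctable'' as a statement uniform in $|\psi(0)\>\in\mathcal{C}$ (the natural reading, and the one the paper uses downstream) but is not literally forced for a single state. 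This is a minor wording issue rather than a gap in the mathematics.
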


In the case when $\{E^{t_i}\}$ is empty, the statement is obvious.
The fault-tolerance of this process is well defined in the case when $\{E^{t_i}\}$ just has one element, say $E^{t_1}$. Following the spirit of fault-tolerant quantum computation in the circuit model, we define fault-tolerance for an adiabatic process:
\begin{mydef}\label{def:adiabatic_fault_tolerance}
Given a code $\mathcal{C}$, defined by the ground space of an initial Hamiltonian with a correctable error set $\mathcal{E}$, a desired encoded operation (holonomy) $\Omega_g$, and an initial state $|\psi(0)\>\in\mathcal{C}$, the corresponding cyclic adiabatic process is called fault-tolerant if any $E^t\in \mathcal{E}_{\text{local}}$ occurring at time $t$ leads to a final state  $E^f\Omega_a|\psi(0)\>$ for some $E^f\in \mathcal{E}$.
\end{mydef}
Unlike AQC, we need to measure the stabilizer generators and do error correction after a single or multiple cycles of encoded operations. At those points, we turn off the Hamiltonian and apply a standard error correction procedure. If this scheme is robust to low-weight thermal noise, and also evolves slowly enough that the adiabatic error is well below the threshold (which we will examine in some detail in Sec.~\ref{sec:fault_tolerant}), then the frequency of error recovery operations can be greatly reduced.

We will show how to construct a fault-tolerant adiabatic process to do a holonomic encoded quantum unitary operation starting from a fault-tolerant circuit, and we will prove that such a process is fault-tolerant by Def.~\ref{def:adiabatic_fault_tolerance}.

\subsection{Scheme}\label{sec:scheme}
Given a stabilizer code with stabilizer group $\mathcal{S}$ for $n$ qubits ($2^n=N$), we set the system Hamiltonian at the very beginning to be
\beq\label{eq:hamiltonian}
H(0) = -\sum_j S_j.
\eeq
Thus the code space is the ground space of the Hamiltonian with dimension $K=2^k$. We deform the Hamiltonian as follows:
\beq\label{eq:hamiltonian_2}
\begin{split}
H(t)&=\sum_{j}C_j(t)S_j(t)\\
&=\sum_{j}C_j(t)U(t,0)S_jU^{\dag}(t,0),\\
\end{split}
\eeq
with $S_j(t)=U(t,0)S_jU^{\dag}(t,0)$, and $[S_i(t),S_j(t)]=0$ for all $i$, $j$. $C_j(t)\in [-1,0]$ is the weight of $S_j(t)$ which is assumed to be controllable.  The $\{S_j(t)\}$ can be viewed as a set of generators of an Abelian group, such as a stabilizer group.

 The Hamiltonian also has a spectral decomposition
\beq
H(t)=\sum_{\textbf{s}}\varepsilon_{\textbf{s}}(t)P_{\textbf{s}}(t).
\eeq
Here, the $\{P_{\textbf{s}}(t)\}$ are projectors onto the simultaneous eigenspace of all the $S_j(t)$, with eigenvalues:
\beq
\varepsilon_{\textbf{s}}(t)=\sum_{j}C_j(t)s_j,
\eeq
where the labels $s_j=\pm1$ form a vector:
\beq
\textbf{s}=\{s_1,s_2\ldots s_{n-k}\}.
\eeq
When the Hamiltonian changes, as shown previously in Eq.~(\ref{eq:space_transform}), the ground space will also evolve. This defines a time-dependent code $\mathcal{C}_t$ .
Let $P_0(t)=U(t,0)P_0(0)U^{\dag}(t,0)$ be the projector onto the ground space of the Hamiltonian $H(t)$ such that $s_j=1$ for all $j$. We emphasize that $U(t,0)$ should be chosen such that
\beq\label{eq:effect_evolution}
\left[\frac{\partial}{\partial \tau}U(t+\tau,t)|_{\tau=0}, P_{\textbf{s}}(t)\right]\neq 0 \quad \text{for all } \textbf{s},
\eeq
except for a finite set points $t$, so that the deformation procedure is non-trivial for all eigenspaces.

This method will work only if the adiabatic condition for each eigenspace  $P_{\textbf{s}}$ is satisfied, so that each eigenspace undergoes some non-trivial holonomy during the cyclic evolution, in case an error takes the system to $P_{\textbf{s}}$ during the process. The standard adiabatic condition~\cite{Messiah:1965:North} can be reformulated for the eigenspace  $\{P_{\textbf{s}_\alpha}\}$:
\beq\label{eq:adiabatic_condition_general}
\frac{\parallel P_{\textbf{s}_\alpha}(t)\frac{\partial}{\partial t}H(t)P_{\textbf{s}_\beta}(t)\parallel_1}{K\left(\varepsilon_{\textbf{s}_\alpha}(t)
-\varepsilon_{\textbf{s}_\beta}(t)\right)^2}\approx  0, \ \text{for any}\ \alpha\neq\beta.
\eeq
This must hold for all $t\in[0,T]$, where $\parallel\cdot\parallel_1$ is the trace norm $(\parallel A\parallel_1=\Tr\sqrt{A^\dag A})$.  For Hamiltonians of the form in Eq.~(\ref{eq:hamiltonian_2}), it is very likely that different $P_{\textbf{s}}(t)$'s share the same eigenvalues so the adiabatic condition would \emph{not} be directly satisfied. We will show later a systematic way to engineer the deformation procedure so that each eigenspace $P_{\textbf{s}}(t)$ satisfies this condition during the adiabatic process.

In addition, each eigenspace should undergo the same holonomy to satisfy Prop.~\ref{prop_correctable_error}. Let's see how it works. Define:
\beq\label{eq:zero_h_condition}
U^{\dag}(t,0)\dot{U}(t,0)=iQ(t,0),
\eeq
where $Q(t,0)$ is Hermitian. In order to obtain the same holonomy for each $P_\textbf{s}$, according to Eq.~(\ref{eq:ht_evolution}),
$P_{\textbf{s}}(0)Q(t,0)P_{\textbf{s}}(0)$ should be related to
$P_0(0) Q(t,0)P_0(0)$ in some way.
If we can make $P_{\textbf{s}}(0)Q(t,0)P_{\textbf{s}}(0)$ either equal to the zero matrix or proportional to $P_\textbf{s}(0)$ for all $\textbf{s}$, then the character of the horizontal lift of $P_\textbf{s}(t)$ is completely determined by $U(t,0)$.

Now we are ready to describe the scheme to construct a fault-tolerant adiabatic process for a holonomic unitary operation, starting from a fault-tolerant circuit. First, we divide the time of evolution $[0,T]$ into $p$ segments. The $l$th segment is $[t_{l-1},t_l]$, and we set $t_0=0$ and $t_p=T$. Given a fault-tolerant circuit $\mathcal{G}$ that realizes an encoded operation $\Omega_g=\prod_{l=1}^{p}g_{p-l+1}$, we can follow the steps listed below:

\begin{enumerate}
  \item Set $l=1$ and $t_0=0$.
  \item Check the number of $S_j(t_{l-1})$ such that $[S_j(t_{l-1}), g_l]\neq 0$. If this number is odd, go to step 3, else go to step 4.
  \item For the $l$th time segment $[t_{l-1},t_l]$, choose a unitary operator $U_l(t,t_{l-1})=g_l^{f_l(t)}$, with $f_l:[t_{l-1},t_l]\rightarrow[0,1]$ a monotonic smooth function with boundary conditions $f(t_{l-1})=0$ and $f(t_{l})=1$. We deform the Hamiltonian such that  $H(t)=U_l(t,t_{l-1})H(t_{l-1})U_l^{\dag}(t,t_{l-1})$ in the interval $[t_{l-1},t_l]$. All $S_j(t_{l-1})$ are replaced at $t_l$ by $S_j(t_l)=g_lS_j(t_{l-1})g_l^\dag$, and $H(t_l)=-\sum_{j}S_j(t_l)$. Then go to step 5.
  \item We need an additional operation to break the degeneracy in this case, in order that the adiabatic condition be satisfied for all $P_{\textbf{s}}(t)$. From those $S_j(t)$ such that $[S_j(t_{l-1}), g_l]\neq 0$, we arbitrarily select one element, say $S_b(t_{l-1})$, and change the Hamiltonian to $H(t_{l-1}^\prime)=H(t_{l-1})+ C_bS_b(t_{l-1})$. $C_b$ is a constant between 0 and 1; we will choose it to be 0.5. This procedure can be done arbitrarily fast, and it will not affect a state entirely contained in any $P_{\textbf{s}}$, so we can just set $t_{l-1}^\prime=t_{l-1}$. We choose $U_l(t,t_{l-1})=g_l^{f_l(t)}$ in this case, where $f_l:[t_{l-1},t_l^\prime]\rightarrow[0,1]$ is a monotonic smooth function with boundary conditions $f_l(t_{l-1})=0$ and $f_l(t_{l}^\prime)=1$. At time $t_l^\prime$, the Hamiltonian becomes $H(t_l^\prime)=-\sum_j S_j(t_l^\prime)+C_bS_b(t_l^\prime)$ with $S_j(t_l^\prime)=g_l S_j(t_{l-1})g_l^\dag$. Then we remove the additional term in the Hamiltonian, leaving $H(t_{l})=-\sum_j S_j(t_l)$, where $S_j(t_l)=S_j(t_l^\prime)$. Again, this can be done arbitrarily fast, so, we can set $t_l^\prime=t_l$. Go to step 5.
  \item If $l=p$, the process is finished. Else, set $l=l+1$ and go to step 2.
\end{enumerate}

First, we will prove that in the case where no error happens during the adiabatic evolution this process indeed gives an encoded operation $\Omega_g$ on the code space. For a circuit $\mathcal{G}$, we
define a set $\mathscr{T}(\mathcal{G})=\{Z_m\}\bigcup\{X_m\}\bigcup \{Z_{m_1}\cdot Z_{m_2}\}$, where $m$ ranges over all qubits in $\mathcal{G}$ and $m_1$, $m_2$ range over all pairs of qubits shared by any two-qubit gates in $\mathcal{G}$.
\begin{thm}\label{thm:theorem1}
Given a fault-tolerant circuit $\mathcal{G}$ defined for a stabilizer code $\mathcal{C}$, with $\mathcal{E}\supseteq\mathcal{E}_{\text{local}}\supset\mathscr{T}(\mathcal{G})$, then following the steps listed above we can perform a holonomic encoded operation $\Omega_g=\prod_{l=1}^{p}g_{p-l+1}$ for the code space $P_0$.
\end{thm}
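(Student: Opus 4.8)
The plan is to analyze the constructed adiabatic process segment by segment and to show that on the $l$th segment the prescribed deformation produces exactly the horizontal lift generated by the circuit gate $g_l$; composing the $p$ segments then yields the holonomy $\Omega_g$ on $P_0$. Throughout, I work in the adiabatic limit, which is legitimate because the ground-space gap (equal to the constant $2$, or to the gap of the perturbed Hamiltonian in step~4) is bounded below independently of the problem size. First I would check that $P_0(t)$ traces out a genuine loop: by construction $U(T,0)=g_p g_{p-1}\cdots g_1=\Omega_g$, and since $\Omega_g$ is an encoded operation for $\mathcal{C}$ it commutes with the code projector, so $P_0(T)=\Omega_g P_0(0)\Omega_g^{\dagger}=P_0(0)$ and the holonomy $\Gamma=V^{\dagger}(0)V(T)$ of Eq.~(\ref{eq:horizontal_holonomy}) is well defined.

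The heart of the argument is the per-segment computation. Fix the segment $[t_{l-1},t_l]$ and set $v_{l-1}:=V(t_{l-1})$. Since the circuit is Clifford, up to an irrelevant overall phase $g_l=\exp(i\alpha_l h_l)$ with $\alpha_l\in\{+\pi/4,-\pi/4\}$ and Pauli generator $h_l\in\{X_m,\,Z_m,\,Z_{m_1}Z_{m_2}\}=\mathscr{T}(\mathcal{G})\subseteq\mathcal{E}_{\text{local}}$. As $U_l(t,t_{l-1})=g_l^{f_l(t)}$ commutes with $h_l$, we get $U_l^{\dagger}\dot U_l=i\alpha_l\dot f_l\,h_l$, so the driving term of Eq.~(\ref{eq:ht_evolution}) restricted to the ground space is proportional to $v_{l-1}^{\dagger}h_l v_{l-1}=v_{l-1}^{\dagger}\,P_0(t_{l-1})h_lP_0(t_{l-1})\,v_{l-1}$. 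I would then establish the key identity $P_0(t_{l-1})h_lP_0(t_{l-1})=0$: by steps~2--4 there is at least one generator $S_j(t_{l-1})$ with $[S_j(t_{l-1}),g_l]\neq0$, and for $g_l=\exp(i\alpha_l h_l)$ conjugation fixes a Pauli $S$ precisely when $[S,h_l]=0$, so $[S_j(t_{l-1}),g_l]\neq0$ forces $\{S_j(t_{l-1}),h_l\}=0$; then, using $S_j(t_{l-1})P_0(t_{l-1})=P_0(t_{l-1})=P_0(t_{l-1})S_j(t_{l-1})$ and $h_lS_j(t_{l-1})=-S_j(t_{l-1})h_l$, one gets $P_0(t_{l-1})h_lP_0(t_{l-1})=-P_0(t_{l-1})h_lP_0(t_{l-1})$, hence it is zero. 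Therefore the driving term vanishes, $h(t,t_{l-1})\equiv I_K$ solves Eq.~(\ref{eq:ht_evolution}), and by Lemma~\ref{lemma:horiziontal_uniqueness} the unique horizontal lift over the segment is $V(t)=U_l(t,t_{l-1})v_{l-1}$, so $V(t_l)=g_l v_{l-1}$. (If no generator anticommutes with $h_l$, then $h_l$ lies in the normalizer of $\mathcal{S}$ and has weight $\le2<d$, so it cannot be a nontrivial logical operator and hence lies in $\mathcal{S}$; then $g_l$ acts on $P_0$ only as a global phase and may be deleted from $\mathcal{G}$, so we may assume this case does not arise.)

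Finally I would compose the segments. Because $V$ is continuous at each $t_l$ and horizontality is a pointwise condition, $V$ is a piecewise-smooth horizontal lift of the whole loop and $\int V^{\dagger}\mathrm{d}V=0$; inductively $V(t_l)=g_l g_{l-1}\cdots g_1 v_0$, whence $V(T)=\Omega_g v_0$ and $\Gamma=V^{\dagger}(0)V(T)=v_0^{\dagger}\Omega_g v_0$, which is exactly the matrix of $\Omega_g$ restricted to $P_0$ in the basis $v_0$; so the process implements the encoded operation $\Omega_g$ on the code space. I should also record that the auxiliary term $C_bS_b$ inserted and then removed in step~4 is conjugated along with $H$ by $U_l$ and commutes with every eigenspace projector $P_{\mathbf{s}}(t)$, so it only shifts eigenvalues without changing the trajectory $P_0(t)$ or its horizontal lift, and hence does not affect the conclusion. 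The main obstacle is precisely the per-segment horizontality step — pinning down $h_l$, verifying $U_l^{\dagger}\dot U_l\propto h_l$, establishing the equivalence $[S_j,g_l]\neq0\Leftrightarrow\{S_j,h_l\}=0$, and deriving $P_0 h_l P_0=0$ — together with disposing of the degenerate weight $\le2$ case; once these are settled, gluing the segments and reading off the holonomy is routine.
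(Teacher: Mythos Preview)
Your overall strategy matches the paper's: work segment by segment, show that the gauge factor $h(t,t_{l-1})$ is a scalar so that the unique horizontal lift is $V(t)=U_l(t,t_{l-1})v_{l-1}$, and then compose to obtain $V(T)=\Omega_g v_0$. The paper reaches the per-segment conclusion via the Knill--Laflamme condition, obtaining $P_0(t_{l-1})\,h_l\,P_0(t_{l-1})=\alpha\,P_0(t_{l-1})$ with real $\alpha$ and hence $h\propto I_K$; you instead aim for the sharper $P_0(t_{l-1})\,h_l\,P_0(t_{l-1})=0$ through anticommutation with a stabilizer generator. Either route suffices once it is established.

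The gap is in how you justify that such an anticommuting generator exists. Steps~2--4 do not guarantee it (they only branch on parity; the count can be zero), and your parenthetical fallback appeals to ``weight $\le 2<d$''. But the stabilizer relevant at time $t_{l-1}$ is $\langle S_j(t_{l-1})\rangle$, the stabilizer of the \emph{intermediate} code $\mathcal{C}_{t_{l-1}}$, and its distance need not equal the original $d$: the Clifford prefix $g_{l-1}\cdots g_1$ preserves $n$ and $k$ but not distance in general. So ``weight $\le2<d$'' does not rule out $h_l$ being a nontrivial logical operator for $\mathcal{C}_{t_{l-1}}$, and in fact your argument never uses the \emph{fault-tolerance} hypothesis at all. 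This is precisely the ingredient the paper invokes (and stresses in the Remark following the theorem): because the tail $g_p\cdots g_l$ maps $\mathcal{E}_{\text{local}}$ into $\mathcal{E}$, the set $\mathcal{E}_{\text{local}}\ni h_l$ is correctable for $\mathcal{C}_{t_{l-1}}$, whence $P_0(t_{l-1})h_lP_0(t_{l-1})=\alpha\,P_0(t_{l-1})$. For a Pauli $h_l$ this already forces either $\pm h_l\in\langle S_j(t_{l-1})\rangle$ (trivial gate) or anticommutation with some generator, after which your $\alpha=0$ argument goes through. Insert this one appeal to fault-tolerance and your proof is complete.
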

\begin{proof}
It is easy to check that there is always a finite energy gap between $P_0(t)$ and any other $P_\textbf{s}(t)$ during the process. So if we choose the time scale properly, $P_0(t)$ can always satisfy the adiabatic condition. Consider the $q$th step of the implementation. If the $q$th gate is single qubit gate, it acts on some qubit $m$. If it is a two qubit gate, it acts on a pair of qubits $m_1$ and $m_2$.
According to Eq.~(\ref{eq:zero_h_condition}), for the $q$th step, we define
\beq
i\tilde{Q}(t,t_{q-1})=P_0(t_{q-1})U_q^{\dag}(t,t_{q-1})\dot{U}_q(t,t_{q-1})P_0(t_{q-1}).
\eeq
Assume we are in step 3 (the argument for step 4 is the same with a trivial modification). $U_q(t,t_{q-1})$ is chosen to be $g_q^{f_q(t)}$, where $g_q$ is one of the gates from $\mathscr{U}_2$. $U_q(t,t_q)$ can be represented explicitly for this set of gates as follows:
\beq
\begin{split}
U_q^{R^x_m}(t,t_{q-1})=&\exp{\left(-i\frac{\pi}{4}f_q(t)X_m\right)},\\
U_q^{Z_{m_1}Z_{m_2}}(t,t_{q-1})=&\exp{\left(i\frac{\pi}{4}f_q(t)Z_{m_1}Z_{m_2}\right)},\\
U_q^{S_m}(t,t_{q-1})=&\exp{\left(-i\frac{\pi}{4}f_q(t)Z_m\right)},\\
U_q^{\frac{\pi}{8}_m}(t,t_{q-1})=&\exp{\left(-i\frac{\pi}{8}f_q(t)Z_m\right)}.\\
\end{split}
\eeq
Define the code $\mathcal{C}_{q-1}$ as the ground space of $H(t_{q-1})$, i.e., the space projected onto by $P_0(t_{q-1})$, and assume  $V_0(t_{q-1})$ to be the horizontal lift of $P_0(t_{q-1})=V_0(t_{q-1})V_0^\dag(t_{q-1})$. According to Def.~\ref{def:fault_tolerant_circuit}, $\prod_{l=q}^pg_{p+q-l}$ maps  $\mathcal{E}_{\text{local}}$ to a subset of $\mathcal{E}$, which is the correctable error set of our code $\mathcal{C}$, so it's easy to check that $\mathcal{E}_{\text{local}}$ is a correctable error set for code $\mathcal{C}_{q-1}$, defined by $P_0(t_{q-1})$. Since $\mathscr{T}(\mathcal{G}) \subset \mathcal{E}_{\text{local}}$, according to Eq.~(\ref{eq:error_correction_condition}), we could have:
\beq
\begin{split}
&\tilde{Q}^{R^x_m}(t,t_{q-1})=\\
&-P_0(t_{q-1})\frac{\pi}{4}\dot{f}_q(t)X_mP_0(t_{q-1})=\alpha_1(t)P_0(t_{q-1}),\\
&\tilde{Q}^{Z_{m_1}Z_{m_2}}(t,t_{q-1})=\\
&P_0(t_{q-1})\frac{\pi}{4}\dot{f}_q(t)Z_{m_1}Z_{m_2}P_0(t_{q-1})=\alpha_2(t)P_0(t_{q-1}),\\
&\tilde{Q}^{S_m}(t,t_{q-1})=\\
&-P_0(t_{q-1})\frac{\pi}{4}\dot{f}_q(t)Z_mP_0(t_{q-1})=\alpha_3(t)P_0(t_{q-1}),\\
&\tilde{Q}^{\frac{\pi}{8}_m}(t,t_{q-1})=\\
&-P_0(t_{q-1})\frac{\pi}{8}\dot{f}_q(t)Z_mP_0(t_{q-1})=\alpha_4(t)P_0(t_{q-1}).
\end{split}
\eeq
It is easy to see that $\alpha_r(t)$, $r=$1, 2, 3, 4 are all real.

It is necessary to check that, for each step, Eq.~(\ref{eq:effect_evolution}) is satisfied. We have
\beq
P_0(t)=U_q(t,t_{q-1})P_0(t_{q-1})U_q^{\dag}(t,t_{q-1}).
\eeq
We will just show the case where an $R^x$ gate is applied at the $q$th stage; the calculations for other gates are just the same.
\beq\label{eq:effect_evolution_in_proof}
\begin{split}
&\left[\frac{\partial}{\partial \tau}U_q^{X_m}(t+\tau,t)|_{\tau=0},P_0(t)\right]\\
=&-i\frac{\pi}{4}\left[\dot{f}_q(t)X_m, P_0(t)\right]\\
=&-i\frac{\pi}{4}U_q^{X_m}(t,t_{q-1})\left[\dot{f}_q(t)X_m, P_0(t_{q-1})\right]U_q^{X_m\dag}(t,t_{q-1}).\\
\end{split}
\eeq
We multiply $P_0(t_{q-1})$ by $[X_m, P_0(t_{q-1})]$ and have
\beq
\begin{split}
&P_0(t_{q-1})\left[X_m, P_0(t_{q-1})\right]\\
=&P_0(t_{q-1})\left(X_mP_0(t_{q-1})-P_0(t_{q-1})X_m\right)\\
=&-P_0(t_{q-1})\left(\frac{4\alpha_1(t)}{\dot{f}_q(t)\pi}I+X_m\right)\neq0.
\end{split}
\eeq
So,  $\left[X_m, P_0(t_{q-1})\right]\neq0$, if $X_m\notin \<S_j(t_{q-1})\>$. This is indeed true in this case, since for a well-defined circuit, $X_m\notin \<S_j(t_{q-1})\>$. Otherwise, $R^x_m$ would have no effect at the $q$th stage. Then we have
\beq
\left[\frac{\partial}{\partial \tau}U_q^{X_m}(t+\tau,t)|_{\tau=0},P_0(t)\right]\neq0,
\eeq
when $\dot{f}_q(t)\neq0$.

For any $\tilde{Q}^r(t,t_{q-1})$, from Eq.~(\ref{eq:ht_evolution}) we get
\beq
\begin{split}
\frac{\partial}{\partial t}h(t,t_{q-1})=&-iV_0^\dag(t_{q-1})\tilde{Q}_r(t,t_{q-1})V_0(t_{q-1})h(t,t_{q-1})\\
=&-i\alpha_r(t)h(t,t_{q-1}).
\end{split}
\eeq
The solution of this equation is:
\beq
h(t,t_{q-1})\propto h(t_{q-1},t_{q-1})=I_{K}.
\eeq
So the horizontal lift during $[t_{q-1},t_q]$ is completely determined by $U_q(t,t_{q-1})$ up to an unimportant global phase:
\beq
V_0(t)\propto U_q(t,t_{q-1})V_0(t_{q-1}).
\eeq
At the end of this step, $V_0(t_q)\propto g_qV_0(t_{q-1})$.  From Eq.~(\ref{eq:adiabatic_evolution}), we could obtain the final state for a given initial state $\psi(0)\in\mathcal{C}$:
\beq
\begin{split}
\psi(T)\propto&V(T)V_0^\dag(0)\psi(0)\\
=&\prod_{l=1}^p g_{p-l+1} V_0(0)V_0^\dag(0)\psi(0)\\
=&\Omega_g P_0(0)\psi(0)\\
=&\Omega_g \psi(0),
\end{split}
\eeq
which is the encoded operation we desired. Note that the final Hamiltonian $H_f=\sum_{\textbf{s}}\varepsilon_{\textbf{s}}\Omega_gP_{\textbf{s}}(0)\Omega_g^\dag=H_i$, so our evolution is cyclic.
\end{proof}

\begin{remark}
Theorem.~\ref{thm:theorem1} solves the problem stated in  Sec.~\ref{sec:math_framework} to find a proper path $\lambda$ for given holonomy. Note that the requirement of a \emph{fault-tolerant circuit} in the implementation is crucial here. If it is not satisfied, the horizontal lift of $P_0(t)$ may not be completely determined by $U_q$ at each step. The condition that  $\mathscr{T}(\mathcal{G})\subset\mathcal{E}_{\text{local}}$ is not a very strong restriction. Indeed, it is always satisfied by stabilizer codes with $d\geq5$, and will generally be satisfied if we start with a fault-tolerant construction. Also note that in principle this theorem is not restricted to Clifford circuits that contain no $\pi/8$ gates. In practice, it is difficult to build the corresponding Hamiltonians constructed in our procedure, because they are hard to represent. The following theorem will show that, in order to make this process fault-tolerant, Clifford circuits are sufficient and preferred.
\end{remark}

\subsection{Fault-Tolerance of the Scheme}\label{sec:fault_tolerant}
In this section, we will discuss the fault-tolerance of the steps to realize holonomic quantum computation as presented above.
\begin{thm}\label{them:fault_tolerant}
Suppose we are given a fault-tolerant circuit $\mathcal{G}$ defined for a stabilizer code $\mathcal{C}$ with $\mathcal{E}\supseteq\mathcal{E}_{\text{local}}\supset\mathscr{T}(\mathcal{G})$. If $\mathcal{G}$ doesn't contain $\pi/8$, then by following the steps of the scheme listed in Sec.~\ref{sec:scheme}, we will get a fault-tolerant cyclic adiabatic process by the meaning of Def.~\ref{def:adiabatic_fault_tolerance}.
\end{thm}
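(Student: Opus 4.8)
The plan is to reproduce, at the level of the whole Hilbert space $\mathbb{C}^N$, the elementary circuit-model argument: an error inserted at time $t$ inside the $q$th segment must commute through the remainder of the process into $U_{qp}E^tU_{qp}^\dagger$, with $U_{qp}=\prod_{l=q}^{p}g_{p+q-l}=g_p\cdots g_q$, which lies in $\mathcal{E}$ by fault-tolerance of $\mathcal{G}$ (Def.~\ref{def:fault_tolerant_circuit}). Since $\mathcal{E}$ is a linear span it suffices to take $E^t$ a single Pauli, and, as in Sec.~\ref{sec:code_and_fault_tolerant}, an error occurring during a gate may be replaced by the perfect gate followed by an error, so I treat a general $t\in[t_{q-1},t_q]$. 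From Theorem~\ref{thm:theorem1} and its proof I carry over: $H(t)=U(t,0)H(0)U^\dagger(t,0)$ with $U(t,0)=g_q^{f_q(t)}g_{q-1}\cdots g_1$ (the transient $C_bS_b$ term of step~4 is diagonal in the syndrome basis and changes no eigenspace, hence is harmless); every eigenprojector is $P_{\textbf{s}}(t)=U(t,0)P_{\textbf{s}}(0)U^\dagger(t,0)$; absent the error the state is $U(t,0)\psi(0)$; and the process is cyclic with $\Omega_g=U(T,0)=g_p\cdots g_1$.

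The first real step is a lemma extending the ``horizontal lift is fixed by $U$'' computation of Theorem~\ref{thm:theorem1} from the ground space to \emph{every} eigenspace. In segment $q$ one has $U_q^\dagger\dot U_q\propto\dot f_q(t)\,P_{g_q}$ with $P_{g_q}\in\{X_m,Z_m,Z_{m_1}Z_{m_2}\}\subset\mathscr{T}(\mathcal{G})$; since $P_{g_q}$ has weight $\le 2<d$ and $\mathscr{T}(\mathcal{G})\subset\mathcal{E}_{\text{local}}$ is correctable for $\mathcal{C}$ and for every code obtained from it by Clifford conjugation or Pauli translation, $P_{\textbf{s}}(t_{q-1})P_{g_q}P_{\textbf{s}}(t_{q-1})$ is $0$ or $\propto P_{\textbf{s}}(t_{q-1})$ for all $\textbf{s}$, so by Eq.~(\ref{eq:ht_evolution}) the accompanying $h$-matrix is a pure phase in every sector. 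Combined with the adiabatic condition for all eigenspaces (engineered by steps~3--4), this makes the adiabatic propagator factorise as $\mathcal{U}_{\mathrm{ad}}(t,0)=U(t,0)D(t)$ with $D(t)=\sum_{\textbf{s}}e^{i\gamma_{\textbf{s}}(t)}P_{\textbf{s}}(0)$ diagonal in the syndrome basis, and hence $\mathcal{U}_{\mathrm{ad}}(T,t)=\Omega_g\,D'\,U^\dagger(t,0)$ with $D'$ again syndrome-diagonal.

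Assembling these, the post-error final state is $\mathcal{U}_{\mathrm{ad}}(T,t)\,E^t\,\mathcal{U}_{\mathrm{ad}}(t,0)\,\psi(0)\propto\Omega_g\,D'\,\tilde E(t)\,\psi(0)$, where $\tilde E(t)=U^\dagger(t,0)E^tU(t,0)=G^\dagger\big(g_q^{-f_q(t)}E^tg_q^{f_q(t)}\big)G$ and $G=g_{q-1}\cdots g_1$. If $[E^t,P_{g_q}]=0$ then $\tilde E(t)=G^\dagger E^tG=:\hat E$ is a Pauli, $\hat E\psi(0)$ occupies a single syndrome sector so $D'$ acts on it by a scalar, and $\Omega_g\hat E\psi(0)=(U_{qp}E^tU_{qp}^\dagger)\Omega_g\psi(0)=E^f\Omega_g\psi(0)$ with $E^f\in\mathcal{E}$. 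If $\{E^t,P_{g_q}\}=0$ then $g_q^{-f_q(t)}E^tg_q^{f_q(t)}=\cos(2\theta f_q(t))E^t-i\sin(2\theta f_q(t))P_{g_q}E^t$ with $2\theta=\pm\pi/2$; here the \emph{exclusion of $\pi/8$ gates} is essential, both so that this is a combination of only two Paulis and so that $g_q^2$ is itself a Pauli $\propto P_{g_q}$. The state then spreads over at most two syndrome sectors, $D'$ scales each, and $\Omega_g$-conjugation turns the two terms into $U_{qp}E^tU_{qp}^\dagger$ and $U_{qp}P_{g_q}E^tU_{qp}^\dagger$ acting on $\Omega_g\psi(0)$. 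The first is in $\mathcal{E}$ by fault-tolerance; for the second I use $P_{g_q}E^t=\mp i\,g_qE^tg_q^\dagger$ together with $U_{qp}g_q=U_{(q+1)p}g_q^2=(\text{phase})\,U_{(q+1)p}P_{g_q}$ and $P_{g_q}E^tP_{g_q}=-E^t$ to reduce it to $(\text{phase})\,U_{(q+1)p}E^tU_{(q+1)p}^\dagger$, which is again in $\mathcal{E}$. As $\mathcal{E}$ is a subspace, the final state equals $E^f\Omega_g\psi(0)$ for a single $E^f\in\mathcal{E}$, which is Def.~\ref{def:adiabatic_fault_tolerance}.

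I expect the main obstacle to be precisely this mid-gate case: justifying carefully that a Pauli error applied while $g_q^{f_q(t)}$ is a non-Clifford partial rotation produces only a \emph{two}-term syndrome superposition, both components of which propagate into $\mathcal{E}$, and pinning down that it is the Clifford restriction ($2\theta=\pm\pi/2$, $g_q^2$ a Pauli) that prevents the uncontrolled growth of Pauli weight which would wreck the argument for $\pi/8$ gates. Subsidiary bookkeeping --- the transient degeneracy-lifting term of step~4, errors at the isolated instants where Eq.~(\ref{eq:effect_evolution}) degenerates, and the endpoints $q\in\{p-1,p\}$ (where one appeals to $\mathcal{E}$ containing all sub-distance Paulis, guaranteed by $\mathscr{T}(\mathcal{G})\subset\mathcal{E}_{\text{local}}$, i.e.\ $d\ge5$) --- should all be routine.
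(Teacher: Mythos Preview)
Your proposal is correct and takes a genuinely different route from the paper's proof. The paper works \emph{sector by sector}: it assumes the error hits at a segment endpoint $t_q$, decomposes $E^{t_q}$ into a basis $\{F_\beta^{t_q}\}$ of operators that map the ground space to a definite eigenspace $P_{\textbf{s}_\beta}$, and then follows each sector separately, spending most of its effort on an explicit verification that the adiabatic condition Eq.~(\ref{eq:adiabatic_condition_general}) holds for \emph{every} pair of eigenspaces (via the index-set decomposition $\mathscr{A},\mathscr{B}$ and the odd/even split that motivates steps~3--4). Only after that does it compute $P_{\textbf{s}_\beta}P_{g_q}P_{\textbf{s}_\beta}=0$ to conclude $V_{\textbf{s}_\beta}(t)=U(t,t_q)F_\beta^{t_q}V_0(t_q)$ and hence $\psi_\beta(T)=F_\beta^T\Omega_g\psi(0)$. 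You instead package the same horizontal-lift fact as a single global identity $\mathcal{U}_{\mathrm{ad}}=U\cdot D$ with $D$ syndrome-diagonal, and then argue directly by Pauli conjugation that the error pulled back to time~0 lies in $\mathcal{E}$.

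What each buys: the paper's proof makes the adiabatic-condition engineering fully explicit, which you relegate to ``subsidiary bookkeeping''; that calculation is indeed mechanical once $\mathscr{A}$ and $\mathscr{B}$ are introduced, but it is the bulk of the paper's argument and you should be prepared to reproduce it. Conversely, your treatment of an error at a \emph{generic} $t\in(t_{q-1},t_q)$ is more careful than the paper's, which simply asserts the extension from $t_q$ to arbitrary $t$ is ``trivial''. Your observation that $g_q^{-f_q(t)}E^tg_q^{f_q(t)}$ is a two-term Pauli sum precisely because $2\theta=\pm\pi/2$, together with the reduction $U_{qp}(P_{g_q}E^t)U_{qp}^\dagger\propto U_{(q+1)p}E^tU_{(q+1)p}^\dagger$ via $g_q^2\propto P_{g_q}$, is a clean way to exhibit where the Clifford restriction enters; in the paper that restriction is used more obliquely, to keep each $S_j(t_q)$ inside $G_n$ so that the $\mathscr{A}/\mathscr{B}$ calculation goes through. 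Your reduction also sidesteps the worry that $P_{g_q}E^t$ might exceed weight $\lfloor(d-1)/2\rfloor$, by rewriting it as a Clifford conjugate of $E^t$ itself.
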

\begin{proof}
Without loss of generality, we assume an error happens at time $t_q$ (the extension of the proof to any time $t$ is trivial). Let $P_0(t_q)=V_0(t_q)V_0^\dag(t_q)$ be the projector for the code $\mathcal{C}_{t_q}$. Since the circuit we follow is fault-tolerant, $\prod_{l=q+1}^pg_{p+q-l+1}$ maps $\mathcal{E}_{\text{local}}$ to a subset of $\mathcal{E}$, which is a correctable error set of our final code $\mathcal{C}$ (since the evolution is cyclic). It is easy to check that $\mathcal{E}_{\text{local}}$ is a correctable error set for code $\mathcal{C}_{t_q}$.  Assuming that $E^{t_q}\in \mathcal{E}_{\text{local}}$ is the error that happens at time $t_q$, it can be represented as $E^{t_q}=\sum_\mu c_\mu E^{t_q}_\mu$ where $\{E_\mu^{t_q}\}$ is a finite set of operators that spans $\mathcal{E}_{\text{local}}$. $\{E_\mu^{t_q}\}$ can always be chosen to satisfy the following error-correction condition:
\beq
P_0(t_q)E_\mu^{t_q\dag} E^{t_q}_\nu P_0(t_q) = d_{\mu\nu}P_0(t_q),
\eeq
where $d_{\mu\nu}$ is a diagonal matrix whose elements are either one or zero.  Those $E_\mu^{t_q}$ with $d_{\mu\mu}=0$ have no effect on the code space. We can always pick $K^\prime=2^{n-k}$ operators from $\{E_\mu^{t_q}\}$ with $d_{\mu\mu}=1$ to form a set $\{E_{K^\prime}^{t_q}\}$ . We can then construct another set of correctable errors with linear combination of element in $E_{K^\prime}^{t_q}$:
\beq
F_\alpha^{t_q}=\sum_{\mu=1}^{K^\prime} E_{\mu}^{t_q}R_{\mu\alpha},
\eeq
with some unitary matrix $R$ (such a unitary matrix always exists and is not unique) such that:
\beq\label{eq:eigenspace}
F_\alpha^{t_q}P_0(t)F_\alpha^{t_q\dag}=P_{\textbf{s}_\alpha}(t),\ \text{for all}\ \alpha.
\eeq
It is easy to verify that $\{F_\alpha^{t_q}\}$ still satisfies the error correction condition:
\beq\label{eq:error_correction_condition_for_F}
P_0(t_q)F_\alpha^{t_q\dag}F_\beta^{t_q}P_0(t_q)=d_{\alpha\beta}P_0(t_q).
\eeq
Now, $E^{t_q}$ can be represented by:
$E^{t_q}=\sum_\beta c^\prime_\beta F^{t_q}_\beta$.
As long as we can correct each $F_\beta^{t_q}$, we can correct $E^{t_q}$. So we consider these errors individually. If no error happens, according to the Theorem.~\ref{thm:theorem1}, the horizontal lift of $P_0(t)$ is $V_0(t)\propto U(t,0)V_0(0)$, and the final state is
\beq
\psi(T)\propto\Omega_g\psi(0),
\eeq
for $\psi(0)\in \mathcal{C}$.
The state after an error $F_\beta^{t_q}$ occurs is
\beq
\psi(t_q)\propto F_\beta^{t_q}V_0(t_q)V_0^\dag(0)\psi(0).
\eeq
According to Eq.~(\ref{eq:eigenspace}), $F_\beta^{t_q} V_0(t_q)$ represents an orthonormal frame of $P_{\textbf{s}_\beta}(t_q)$.

Next, we prove for $P_{\textbf{s}_\beta}(t)$, $t>t_q$, that the adiabatic condition Eq.~(\ref{eq:adiabatic_condition_general}) is satisfied by this scheme. We need consider only a single time segment $[t_q,t_{q+1}]$.
Again, we just treat the case of $g_{q+1}=R^x_m$ gate; the arguments for the other gates are exactly the same, since the generators of these gates are all Pauli operators. For any $\alpha\neq\beta$,
\beq
\begin{split}
&P_{\textbf{s}_\alpha}(t)\frac{\partial H(t)}{\partial t}P_{\textbf{s}_\beta}(t)=\\
&-i\frac{\pi}{4}\dot{f}_{q+1}(t)\cdot U_{q+1}^{X_m}(t,t_{q})\big(P_{\textbf{s}_\alpha}(t_q)X_m H(t_{q})P_{\textbf{s}_\beta}(t_q)\\
&-P_{\textbf{s}_\alpha}(t_q)H(t_{q})X_mP_{\textbf{s}_\beta}(t_q)\big)U^{X_m\dag}_{q+1}(t,t_{q}),
\end{split}
\eeq
where
\beq
P_{\textbf{s}_r}(t)=U_{q+1}^{X_m}(t,t_q)P_{\textbf{s}_r}(t_q)U_{q+1}^{X_m\dag}(t,t_q),
\eeq
for $r=\alpha$ or $\beta$.
Define the index set $\mathscr {I}=\{1,2,\ldots n-k\}$, and two sets $\mathscr{A}=\{j\in\mathscr{I}|[S_j(t_{q}),g_{q+1}]\neq0\}$ and
$\mathscr{B}=\mathscr{I}\backslash \mathscr{A}$. We have
\beq
\begin{split}
&P_{\textbf{s}_r}(t_{q})=\prod_{j=1}^{n-k}\frac{I+s_{r_j}S_j(t_{q})}{2}\\
=&\underbrace{\prod_{j\in\mathscr{A}}\frac{I+s_{r_j}S_j(t_{q})}{2}}_{P_{\textbf{s}_r}^{\mathscr{A}}(t_{q})}\cdot
\underbrace{\prod_{j^\prime\in\mathscr{B}}\frac{I+s_{r_{j^\prime}}S_{j^\prime}(t_{q})}{2}}_{P_{\textbf{s}_r}^{\mathscr{B}}(t_{q})}.
\end{split}
\eeq
$\mathcal{G}$ is composed of gates from $\mathscr{U}_2$, whose elements are in the normalizer of $G_n$, so at any stage $q$,  $S_j(t_{q})\in G_n$. So we have
\beq\label{eq:adiabatic_calculation_1}
\begin{split}
&P_{\textbf{s}_\alpha}(t_{q})X_mH(t_{q})P_{\textbf{s}_\beta}(t_{q})\\
=&\varepsilon_{\textbf{s}_\beta}(t_q)X_m
\prod_{j\in\mathscr{A}}\frac{I-s_{\alpha_j}S_j(t_{q})}{2}P_{\textbf{s}_\alpha}^{\mathscr{B}}(t_{q})
P_{\textbf{s}_\beta}^{\mathscr{B}}(t_{q})P_{\textbf{s}_\beta}^{\mathscr{A}}(t_{q}).
\end{split}
\eeq
Similarly, we have
\beq\label{eq:adiabatic_calculation_2}
\begin{split}
&P_{\textbf{s}_\alpha}(t_{q})H(t_{q})X_mP_{\textbf{s}_\beta}(t_{q})\\
=&\varepsilon_{\textbf{s}_\alpha}(t_q)
P_{\textbf{s}_\alpha}^{\mathscr{A}}(t_{q})P_{\textbf{s}_\alpha}^{\mathscr{B}}(t_{q})
P_{\textbf{s}_\beta}^{\mathscr{B}}(t_{q})\prod_{j\in\mathscr{A}}\frac{I-s_{\beta_j}S_j(t_{q})}{2}X_m.
\end{split}
\eeq

If $s_{\alpha_j}\neq s_{\beta_j}$ for any $j\in\mathscr{B}$, then Eq.~(\ref{eq:adiabatic_calculation_1}) and Eq.~(\ref{eq:adiabatic_calculation_2}) would be zero, and the adiabatic condition is automatically satisfied. For those $\textbf{s}_\alpha$ such that $s_{\alpha_j}=s_{\beta_j}$ for all $j\in\mathscr{B}$, we have
\beq
\begin{split}
&P_{\textbf{s}_\alpha}(t_{q})X_mH(t_{q})P_{\textbf{s}_\beta}(t_{q})\\
=&\varepsilon_{\textbf{s}_\beta}(t_q)X_m\prod_{j\in\mathscr{A}}\frac{I-s_{\alpha_j}S_j(t_{q})}{2}
\frac{I+s_{\beta_j}S_j(t_{q})}{2}P_{\textbf{s}_\beta}^{\mathscr{B}}(t_{q}),
\end{split}
\eeq
and
\beq
\begin{split}
&P_{\textbf{s}_\alpha}(t_{q})H(t_{q})X_mP_{\textbf{s}_\beta}(t_{q})\\
=&\varepsilon_{\textbf{s}_\alpha}(t_q)P_{\textbf{s}_\alpha}^{\mathscr{B}}(t_{q})\prod_{j\in\mathscr{A}}\frac{I+s_{\alpha_j}S_j(t_{q})}{2}
\frac{I-s_{\beta_j}S_j(t_{q})}{2}X_m.
\end{split}
\eeq
The above two expressions are nonzero only if $s_{\beta_j}=-s_{\alpha_j}$ for all $j\in\mathscr{A}$. Therefore, there is only one $\beta$ such that $P_{\textbf{s}_\alpha}\dot{H}(t)P_{\textbf{s}_\beta}\neq0$ and hence needs further calculation. For that specific $\textbf{s}_\beta$, we have a simple relation:
\beq
X_mP_{\textbf{s}_\alpha}(t_{q})X_m=P_{\textbf{s}_\beta}(t_q).
\eeq
We obtain
\beq
\begin{split}
&\big\|P_{\textbf{s}_\alpha}(t)\dot{H}(t)P_{\textbf{s}_\beta}(t)\big\|_1\\
=&\frac{\pi}{4}\dot{f}_{q+1}(t)\big\|\varepsilon_{\textbf{s}_\beta}(t_q)
X_m P_{\textbf{s}_\beta}(t_q)-\varepsilon_{\textbf{s}_\alpha}(t_q)P_{\textbf{s}_\alpha}(t_q)
X_m\big\|_1\\
=&\frac{\pi}{4}\dot{f}_{q+1}(t)K\cdot\big|\varepsilon_{\textbf{s}_\alpha}(t_q)-
\varepsilon_{\textbf{s}_\beta}(t_q)\big|.
\end{split}
\eeq
So the LHS of Eq.~(\ref{eq:adiabatic_condition_general}) reduces to
\beq\label{eq:error_estimation} \frac{\pi\dot{f}_{q+1}(t)}{4\big|\varepsilon_{\textbf{s}_\alpha}(t_q)-
\varepsilon_{\textbf{s}_\beta}(t_q)\big|}.
\eeq
If we are in Step 3, since $|\mathscr{A}|$ is odd, we have
\beq
|\varepsilon_{\textbf{s}_\alpha}(t_q)-\varepsilon_{\textbf{s}_{\beta}}(t_q)|
=\big|-\sum_{j\in\mathscr{A}}2s_{\alpha_j}\big|\geq2,
\eeq
and if we are in Step 4, because of our operation to break the degeneracy by setting $C_b=0.5$, we have
\beq
|\varepsilon_{\textbf{s}_\alpha}(t_q)-\varepsilon_{\textbf{s}_{\beta}}(t_q)|
=\Big|-\sum_{\substack{j\in\mathscr{A}\\j\neq b}}2s_{\alpha_j}-s_{\alpha_{b}}\Big|\geq1.
\eeq
If $\frac{\pi}{4}\dot{f}_{q+1}(t)\ll 1$ is satisfied, which is always possible, then $P_{\textbf{s}_\beta}(t)$ satisfies the adiabatic condition for time segment $[t_q,t_{q+1}]$. The same argument can be applied to the rest of the time segments to show that the adiabatic condition can always be satisfied by choosing appropriate functions $f(t)$.

Now, we can use the evolution equation  Eq.~(\ref{eq:adiabatic_evolution}):
\beq\label{eq:error_evolution_equation}
\begin{split}
\psi_\beta(T)=& V_{\textbf{s}_\beta}(T)V_0^\dag(t_q)F_\beta^{t_q\dag}F_\beta^{t_q} V_0(t_q)V_0^\dag(0)\psi(0)\\
=&V_{\textbf{s}_\beta}(T)\underbrace{V_0^\dag(t_q)V_0(t_q)}_{I_K}V_0^\dag(t_q)F_\beta^{t_q\dag}F_{\beta}^{t_q}\\
&\times V_0(t_q)\underbrace{V_0^\dag(t_q)V_0(t_q)}_{I_K}V_0^\dag(0)\psi(0)\\
=&V_{\textbf{s}_\beta}(T)V_0^\dag(t_q)d_{\beta\beta}V_0(t_q)V_0^\dag(t_q)V_0(t_q)V_0^\dag(0)\psi(0)\\
=&V_{\textbf{s}_\beta}(T)V_0^\dag(0)\psi(0),\\
\end{split}.
\eeq
where the third equality follows from Eq.~(\ref{eq:error_correction_condition_for_F}). Here, $V_{\textbf{s}_\beta}(t)$ is defined as
\beq
V_{\textbf{s}_\beta}(t)=U(t,t_q)F_\beta^{t_q}V_0(t_q)h(t,t_q) \ \text{for }t>t_q,
\eeq
as the horizontal lift of $P_{\textbf{s}_\beta}(t)$ given initial condition $V_{\textbf{s}_\beta}(t_q)=F_\beta^{t_q}V_0(t_q)$.

Again, we just focus on one time segment $t\in[t_q,t_{q+1}]$ and the single gate $R^x_m$, since the rest are just the same:
\beq
\begin{split}
\tilde{Q}^{R_m^x}(t,t_q)=&P_{\textbf{s}_\beta}(t_q)Q^{R^x_m}(t,t_q)P_{\textbf{s}_\beta}(t_q)\\
=&-\frac{\pi}{4}\dot{f}_q(t)P_{\textbf{s}_\beta}(t_q)X_mP_{\textbf{s}_\beta}(t_q)=0.
\end{split}
\eeq
So $h(t,t_q)=I_K$, and according to Lemma~\ref{lemma:horiziontal_uniqueness},
\beq
V_{\textbf{s}_\beta}(t)=U(t,t_q)F_\beta^{t_q}V_0(t_q)
\eeq
is the only horizontal lift of $P_{\textbf{s}_\beta}(t)$ for $t>t_q$. Following Eq.~(\ref{eq:error_evolution_equation}), we have
\beq
\begin{split}
\psi_\beta(T)=& U(T,t_q)F_\beta^{t_q}V_0(t_q)V_0^\dag(0)\psi(0)\\
=&F_\beta^T U(T,t_q)V_0(t_q)V_0^\dag(0)\psi(0)\\
=&F_\beta^T U(T,0)V_0(0)V_0^\dag(0)\psi(0)\\
=&F_\beta^T\Omega_g\psi(0),
\end{split}
\eeq
where $F_\beta^T$ is defined as $U(T,t_q)F_\beta^{t_q} U^{\dag}(T,t_q)$, with $U(T,t_q)=\prod_{l=q+1}^p g_{p+q-l+1}$.
Since the circuit we follow is fault-tolerant, $F_\beta^T\in \mathcal{E}$.
Taking dynamic phases into account, if $E^{t_q}$ occurs, the final state should be
\beq
\psi(T)=\sum_\beta c_\beta^\prime \exp{\left(-i\int_{t_1}^T\varepsilon_{\textbf{s}_\beta}(t)\text{d}t\right)}
F_\beta^T\Omega_g\psi(0),
\eeq
so the adiabatic process we propose here is fault-tolerant by the meaning of Def.~\ref{def:adiabatic_fault_tolerance}.

\end{proof}
\begin{remark}
We now discuss some details of the adiabatic theorem and its application to our scheme. The traditional version of the adiabatic theorem stated in ~\cite{Messiah:1965:North} guarantees that the adiabatic approximation is satisfied with precision $\delta\leq\epsilon^2$ during the adiabatic evolution, if the condition
\beq
\frac{\sup_{t\in[0,T]}\parallel P_{\textbf{s}_\alpha}(t)\frac{\partial}{\partial t}H(t)P_{\textbf{s}_\beta}(t)\parallel_1}{\inf_{t\in[0,T]}K\left(\varepsilon_{\textbf{s}_\alpha}(t)
-\varepsilon_{\textbf{s}_\beta}(t)\right)^2}\leq \epsilon, \ \text{for any}\ \alpha\neq\beta,
\eeq
is satisfied (note that $K=2^k$ is the dimension of the code space), which is equivalent to:
\beq
\sup_{q,t\in[0,T]}\frac{\pi\dot{f}_{q}(t)}{4}\leq\epsilon
\eeq
in our case.
However, it is known that this statement is neither sufficient nor necessary, and under certain conditions on the Hamiltonian, we can obtain better results~\cite{Hagedorn:2002:235,lidarAdiabaticaccuracy:102106}.
According to Ref.~\cite{lidarAdiabaticaccuracy:102106}, for a Hamiltonian $H(\vartheta)$($\vartheta=t/T$) that is analytic near $[0,1]$ in the complex plane, with the absolute value of the imaginary part of the nearest pole being $\gamma$, and the first $\mathcal{N}\geq1$ derivatives at boundaries equal to zero, i.e., $H^{(l)}(0)=H^{(l)}(1)=0$ for $l\leq\mathcal{N}$, if we set
\beq\label{eq:evolution_time}
T=\frac{e}{\gamma}\mathcal{N}\frac{\xi^2}{d_{\text{min}}^3},
\eeq
where
\beq
\xi=\sup_{\vartheta\in[0,1]}\parallel\text{d}H/\text{d}\vartheta\parallel_\infty,
\eeq
($\parallel\cdot\parallel_\infty$ is standard operator norm, and $d_{\text{min}}$ is the minimum spectral gap) then the adiabatic approximation error satisfies
\beq
\delta_{\text{ad}}\leq(\mathcal{N}+1)^{\gamma+1}e^{-\mathcal{N}},
\eeq
or equivalently,
\beq
\delta_{\text{ad}}\lesssim (cT+1)^{\gamma+1}e^{-cT},
\eeq
with $c=\frac{\gamma d^3}{e\xi^2}$. This means that we can decrease the adiabatic error exponentially with evolution time $T$, which is  proportional to $\mathcal{N}$.
Applying this theorem to our piecewise adiabatic evolution, for the $q$th segment, we set $T_q=t_{q}-t_{q-1}$ to be $\frac{e}{\gamma}\mathcal{N}\xi_q^2$, where $\xi_q$ is $\xi$ defined on the $q$th time segment, and $f_q(t)$ is chosen such that a) the boundary condition mentioned above is satisfied, and b) $H(\vartheta)$ is analytic near $[0,1]$. The adiabatic error for an encoded unitary operation composed of $p$ gates can therefore be bounded by
\beq
\delta_{\text{ad}}\lesssim p\cdot\sup_{1\leq q\leq p}(c_qT_q+1)^{\gamma+1}e^{-c_qT_q}.
\eeq

During the adiabatic process, the energy gap between the ground space and any other eigenspace is lower-bounded by 1, and this does not decrease with the size of the problem or the number of levels of code concatenation. Again, we assume the qubits are independently coupled to the thermal environment and the corresponding thermal errors are local and low-weight during certain period of evolution time. Those low-weight thermal excitations will cause transitions from the ground space to higher energy excited spaces. Their rate will decrease exponentially with the existence of an energy gap ~\cite{TameemNJP:adiabaticMarkovianME},
$\delta_{\text{thermal}}\sim O\left(\exp(-d_{\min})\right)$,
while Eq.~(\ref{eq:evolution_time}) shows that the time needed to finish the process grows inversely as the cube of the minimum gap. The system error can be bounded by the sum of these two errors~\cite{LidarTowardFTAdqcPhysRevLett.100.160506}:
\beq
\delta_S<\delta_{\text{thermal}}+\delta_{\text{ad}}.
\eeq
So, qualitatively, we can make both the adiabatic error and thermal excitation exponentially small with efficient overhead in processing time. (However see the discussion in Sec.~\ref{sec:conclusion} for possible limitations of this argument.) 
\end{remark}

Theorem~\ref{them:fault_tolerant}  builds an equivalence relation between a fault-tolerant encoded Clifford circuit and a fault-tolerant adiabatic process that gives the same encoded unitary operation. For most stabilizer codes (e.g., Steane code, the surface code, or the C4 code), encoded operations in the Clifford group can all be realized by such fault-tolerant circuits. Standard techniques, like magic state injection and distillation, can realize fault-tolerant encoded non-Clifford gates like the encoded $\pi/8$ and encoded Toffoli gates. Magic state distillation can be implemented by fault-tolerant encoded unitary gates from the Clifford group. Thus, this holonomic scheme is universal for fault-tolerant quantum computation.

\section{Examples}\label{sec:examples}
In this section, we apply the procedure developed above to construct adiabatic processes for specific codes. For pedagogical purposes, our first example realizes the encoded $X$ for the simplest 3-qubit repetition code. Our second example is the 7 qubit Steane code. This example is of practical importance because through concatenation of this simple code, fault-tolerant quantum computation can be achieved when the error rate is lower than the threshold.
\subsection{3-qubit repetition code}\label{sec:3qubit}
There are two stabilizer generators for this code, as shown below:
\begin{center}
\begin{tabular}{ c|c c c  }
  \hline
  \hline
  $   $ & 1 & 2 & 3  \\
  \hline
  $S_1$ & $Z$ & $Z$ & $I$ \\
  $S_2$ & $I$ & $Z$ & $Z$ \\
  \hline
  \hline
\end{tabular}
\end{center}
The encoded $X$ operator for this code is $\overline{X}=X_1 X_2 X_3$. The encoded $X$ gate can be performed by a circuit like $\overline{X}=R^x_1 R^x_1 R^x_2 R^x_2 R^x_3 R^x_3$, so the process takes 6 steps. The initial Hamiltonian is
\beq
H(0)=-Z_1Z_2-Z_2Z_3.
\eeq
For the first step of the adiabatic process, we have $[X_1, Z_1Z_2]\neq0$ for $t\in[0,t_1]$. So the Hamiltonian during that interval is
\beq
H(t)=-\cos\left(f_1(t)\frac{\pi}{2}\right)Z_1Z_2
+\sin\left(f_1(t)\frac{\pi}{2}\right)Y_1Z_2-Z_2Z_3,
\eeq
with $H(t_1)=Y_1Z_2-Z_2Z_3$. In the second step, for $t\in[t_1,t_2]$, the Hamiltonian is:
\beq
H(t)=\cos\left(f_2(t)\frac{\pi}{2}\right)Y_1Z_2
+\sin\left(f_2(t)\frac{\pi}{2}\right)Z_1Z_2-Z_2Z_3,
\eeq
with $H(t_2)=Z_1Z_2-Z_2Z_3$. In the third step, we see that $[X_2,Z_1Z_2]\neq 0$ and $[X_2,Z_2Z_3]\neq 0$, which implies that there might exist a $P_{\textbf{s}}$ that doesn't satisfy the adiabatic condition. Actually, as shown on the left side of Fig.~\ref{Fig:energy_diagram},
\begin{figure}[!ht]
\includegraphics[width=80mm]{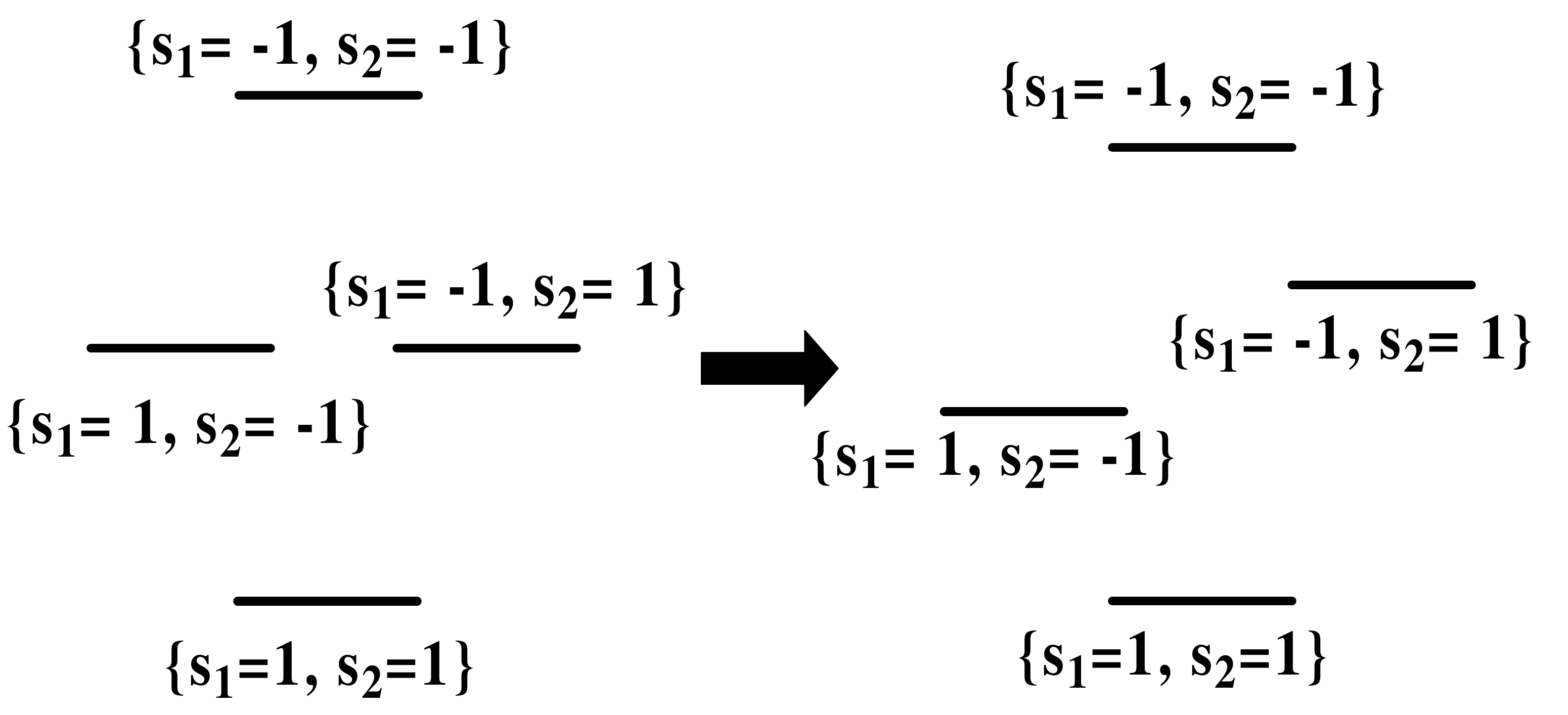}
\caption{\label{Fig:energy_diagram} The variation of the energy diagram at the beginning of the third and fourth step to break the degeneracy of space $\{s_1=1,s_2=-1\}$ and $\{s_1=-1,s_2=1\}$ .}
\end{figure}
we have $[P_{\textbf{s}=\{1,-1\}}(t_2)+P_{\textbf{s}=\{-1.1\}}(t_2),X_2]=0$, which means both $P_{\textbf{s}=\{1,-1\}}(t_2)$ and $P_{\textbf{s}=\{-1.1\}}(t_2)$ will not satisfy the adiabatic condition during the evolution if we do not break this degeneracy. Following the scheme in Sec.~\ref{sec:scheme}, we alter the Hamiltonian at $t_2$ to be
$H(t_2)=Z_1Z_2-0.5Z_2Z_3$. The corresponding energy diagram is shown on the right side of Fig.~\ref{Fig:energy_diagram}. Then we vary the Hamiltonian in the following way for $t\in[t_2,t_3]$:
\beq
\begin{split}
H(t)=& \cos\left(f_3(t)\frac{\pi}{2}\right)Z_1Z_2 - \sin\left(f_3(t)\frac{\pi}{2}\right) Z_1Y_2\\
&-0.5 \cos\left(f_3(t)\frac{\pi}{2}\right)Z_2Z_3+0.5\sin\left(f_3(t)\frac{\pi}{2}\right) Y_2Z_3,
\end{split}
\eeq
with $H(t_3)=-Z_1Y_2+Y_2Z_3$. We continue to break the degeneracy in the fourth step, since $X_2$ again does not commute with both $P_{\textbf{s}=\{1,-1\}}(t_3)$ and $P_{\textbf{s}=\{1,-1\}}(t_3)$. So for $t\in[t_3,t_4]$, the Hamiltonian is:
\beq
\begin{split}
H(t)=&-\cos\left(f_4(t)\frac{\pi}{2}\right)Z_1Y_2-\sin\left(f_4(t)\frac{\pi}{2}\right)Z_1Z_2\\
&+0.5\cos\left(f_4(t)\frac{\pi}{2}\right)Y_2Z_3+0.5\sin\left(f_4(t)\frac{\pi}{2}\right)Z_2Z_3,
\end{split}
\eeq
with $H(t_4)=-Z_1Z_2+0.5Z_2Z_3$, which can then be restored to $H(t_4)=-Z_1Z_2+Z_2Z_3$. The fifth and sixth steps are just like the first and the second steps. The final Hamiltonian is $H(T=t_6)=-Z_1Z_2-Z_2Z_3$, which is equal to the initial Hamiltonian, and we obtain our geometric encoded $X$ operation.
\begin{remark}
The encoded $Z$ operator for this code is $\overline{Z}=Z_1Z_2Z_3$. However, we cannot use our scheme to build the adiabatic process according to this circuit, since $Z_1$, $Z_2$ and $Z_3$ all commute with the initial Hamiltonian $H(0)$. We can see that for this simple code, there doesn't exist an $\mathcal{E}$ that includes $\{Z_1,Z_2,Z_3\}$, so the conditions for both Theorem~\ref{thm:theorem1} and Theorem~\ref{them:fault_tolerant} are not satisfied.
\end{remark}

\subsection{The Steane code}\label{sec:Steane's_code}
Fault-tolerant quantum computation can be realized through concatenation of Steane code. In our scheme, we only apply our scheme \emph{at the bottom level of concatenation}. For higher levels, encoded unitary operations and error correction are done in the usual way. By doing so, we keep the constant energy gap between the ground space and error spaces and thus maintain the ability to suppress low weight thermal errors, and we can bound the weight of terms in the system Hamiltonian. One set of  generators of the stabilizer group of the Steane code is listed below:
\begin{center}
\begin{tabular}{ c|c c c c c c c }
  \hline
  \hline
  $   $ & 1 & 2 & 3 & 4 & 5 & 6 & 7  \\
  \hline
  $S_1$ & $X$ & $I$ & $X$ & $X$ & $I$  & $X$ & $I$\\
  $S_2$ & $I$ & $X$ & $X$ & $I$ & $I$  & $X$ & $X$ \\
  $S_3$ & $X$ & $I$ & $X$ & $I$ & $X$  & $I$ & $X$ \\
  $S_4$ & $Z$ & $I$ & $Z$ & $Z$ & $I$  & $Z$ & $I$ \\
  $S_5$ & $I$ & $Z$ & $Z$ & $I$ & $I$  & $Z$ & $Z$ \\
  $S_6$ & $Z$ & $I$ & $Z$ & $I$ & $Z$  & $I$ & $Z$  \\
  \hline
  \hline
\end{tabular}
\end{center}
The circuits for the encoded Hadmard, encoded $S$ and encoded $X$ and $Z$ gates are all bit-wise transversal, and thus naturally fault-tolerant. Note that each Hadmard can be decomposed into $SR^xS$ up to a global phase. The geometric realizations of such gates are similar to that of the encoded $X$ for the 3-qubit repetition code shown above. So in this section we focus on the CNOT gate.

\subsubsection{CNOT Gate}\label{sec:CNOT}
For the Steane code, the fault-tolerant encoded ${\rm CNOT}_{1\rightarrow2}$ (control qubit encoded in block 1, target qubit encoded in block 2) can be realized transversally between two blocks of qubits. We illustrate our scheme for one pair of qubits from the two blocks, all the other operations are the same. The initial Hamiltonian $H(0)$ can be written as $-\sum_{j=1}^7\left(S_j^1+S_j^2\right)$, where $S^i_{j}$ is the $j$th generator for the $i$th block.
Each physical CNOT can be decomposed into
\vspace{1.5em}
\centerline{
\Qcircuit @C=1.4em @R=1.0em {
&\ctrl{1}& \qw &\ccteq{1} \\
&\targ &\qw &\ccteqg  \\
}
}
\vspace{1.5em}
\centerline{
\Qcircuit @C=1.1em @R=1.0em {
&\qw &\qw &\qw &\qw&\multigate{1}{R^{zz}} & \gate{S}&\qw &\qw &\qw \\
&\gate{S}&\gate{R^x}&\gate{S}&\gate{S}& \ghost{R^{zz}} &\gate{S}&\gate{R^x}&\gate{S}&\qw \\
}
}
up to a global phase.

As we can see, there are nine gates in the circuit. The transformation of the first and last four single qubit gates has been discussed before. We will just show the Hamiltonian during time interval $[t_4,t_5]$ when the two-qubit gate is performed.

The Hamiltonian at time $t_4$ can be shown to be:
\beq
\begin{split}
H(t_4)=&-\sum_{j=1}^7S_j^1-S_2^2-S_5^2
-Z^2_1X^2_3X^2_4X^2_6-Z^2_1X^2_3X^2_5X^2_7\\
&-Y^2_1Z^2_3Z^2_4Z^2_6-Y^2_1Z^2_3Z^2_5Z^2_7.
\end{split}
\eeq
For an $R^{zz}$ gate acting on qubit 1 in both blocks, we have
$[Z^1_1Z^2_1,X^1_1X^1_3X^1_4X^1_6]\neq0$, $[Z^1_1Z^2_1,X^1_1X^1_3X^1_5X^1_7]\neq0$, $[Z^1_1Z^2_1,Y^2_1Z^2_3Z^2_4Z^2_6]\neq0$ and $[Z^1_1Z^2_1,Y^2_1Z^2_3Z^2_5Z^2_7]\neq0$ for $t\in[t_4,t_5]$, so the Hamiltonian during this interval can be chosen to be:
\beq
\begin{split}
&H(t)=\\
&-\sum_{\substack{j\neq1\\j\neq3}}S_j^1-S_2^2-S_5^2-Z^2_1X^2_3X^2_4X^2_6-Z^2_1X^2_3X^2_5X^2_7\\
&-\cos\left(f_5(t)\frac{\pi}{2}\right)X^1_1X^1_3X^1_4X^1_6+\sin\left(f_5(t)\frac{\pi}{2}\right)Y^1_1X^1_3X^1_4X^1_6Z^2_1\\
&-\cos\left(f_5(t)\frac{\pi}{2}\right)X^1_1X^1_3X^1_5X^1_7+\sin\left(f_5(t)\frac{\pi}{2}\right)Y^1_1X^1_3X^1_5X^1_7Z^2_1\\
&-\cos\left(f_5(t)\frac{\pi}{2}\right)Y^2_1Z^2_3Z^2_4Z^2_6-\sin\left(f_5(t)\frac{\pi}{2}\right)Z^1_1X^2_1Z^2_3Z^2_4Z^2_6\\
&-0.5\cos\left(f_5(t)\frac{\pi}{2}\right)Y^2_1Z^2_3Z^2_5Z^2_7\\
&-0.5\sin\left(f_5(t)\frac{\pi}{2}\right)Z^1_1X^2_1Z^2_3Z^2_5Z^2_7,
\end{split}
\eeq
with
\beq
\begin{split}
H(t_5)=&-\sum_{\substack{j\neq1\\j\neq3}}S_j^1-S_2^2-S_5^2-Z^2_1X^2_3X^2_4X^2_6-Z^2_1X^2_3X^2_5X^2_7\\
&+Y^1_1X^1_3X^1_4X^1_6Z^2_1+Y^1_1X^1_3X^1_5X^1_7Z^2_1\\
&-Z^1_1X^2_1Z^2_3Z^2_4Z^2_6-Z^1_1X^2_1Z^2_3Z^2_5Z^2_7.
\end{split}
\eeq

After all nine gates have been performed, the Hamiltonian will be:
\beq\label{eq:CNOT_first_part_qubit}
\begin{split}
H(T_1)=&-\sum_{\substack{j\neq1\\j\neq3}}(S_j^1+S_j^2)\\
&-X_1^1X_3^1X_4^1X_6^1X_1^2-X_1^1X_3^1X_5^1X_7^1X_1^2\\
&-Z_1^1Z_1^2Z_3^2Z_4^2Z_6^2-Z_1^1Z_1^2Z_3^2Z_5^2Z_7^2.
\end{split}
\eeq

After repeating this procedure on all 7 pairs of qubits, the final Hamiltonian will be:
\beq\label{eq:CNOT_final_Hamiltonian}
\begin{split}
H(T)=&-X_1^1X_3^1X_4^1X_6^1X_1^2X_3^2X_4^2X_6^2\\
     &-X_2^1X_3^1X_6^1X_7^1X_2^2X_3^2X_6^2X_7^2\\
     &-X_1^1X_3^1X_5^1X_7^1X_1^2X_3^2X_5^2X_7^2\\
     &-Z_1^1Z_3^1Z_4^1Z_6^1-Z_2^1Z_3^1Z_6^1Z_7^1-Z_1^1Z_3^1Z_5^1Z_7^1\\
     &-X_1^2X_3^2X_4^2X_6^2-X_2^2X_3^2X_6^2X_7^2-X_1^2X_3^2X_5^2X_7^2\\
     &-Z_1^1Z_3^1Z_4^1Z_6^1Z_1^2Z_3^2Z_4^2Z_6^2\\
     &-Z_2^1Z_3^1Z_6^1Z_7^1Z_2^2Z_3^2Z_6^2Z_7^2\\
     &-Z_1^1Z_3^1Z_5^1Z_7^1Z_1^2Z_3^2Z_5^2Z_7^2,
\end{split}
\eeq
which is equal to $H(0)$. Although the final Hamiltonian equals to the initial Hamiltonian, the maximum weight of its elements has doubled, which is not good for practical implementation. Although recent results have shown how to map such Hamiltonians to more physically reasonable two-body interactions \cite{RydbergsimulatorNatPhysics, KempeQuantumGadget, Oliveira:2008:CQS:2016985.2016987}, it is still important to decrease the maximum weight of the Hamiltonian terms.

\subsubsection{Lowering the weight of the Hamiltonian}
The maximum weight of the terms in the final Hamiltonian is 8, compared to the 4 for the initial Hamiltonian given by Eq.~(\ref{eq:CNOT_final_Hamiltonian}). This problem may not exist for a fault-tolerant scheme based on the surface code~\cite{Folwer2012PhysRevA.86.032324}, since during the process of code deformation, the weight of the stabilizer generators is always bounded by 4. However, even for the Steane code, we can lower the weight of the Hamiltonian terms during the process by using the following observation:
\beq
H(t)=-\sum_jS_j(t)=U_LH(t)U_L^\dag=-\sum_jU_LS_j(t)U_L^\dag,
\eeq
for some unitary operator $U_L$ which commutes with $H(t)$, but not necessarily with the individual terms $S_j(t)$. This means that the decomposition of the Hamiltonian is not unique, and we can take advantage of this freedom. Here we set $U_L=\overline{\rm CNOT}_{12}$. After the first three transversal CNOTs, the Hamiltonian would become:
\beq
\begin{split}
H(T_3)=&-X_1^1X_3^1X_4^1X_6^1X_1^2X_3^2-X_2^1X_3^1X_6^1X_7^1X_2^2X_3^2\\
     &-X_1^1X_3^1X_5^1X_7^1X_1^2X_3^2-Z_1^1Z_3^1Z_4^1Z_6^1\\
     &-Z_2^1Z_3^1Z_6^1Z_7^1-Z_1^1Z_3^1Z_5^1Z_7^1\\
     &-X_1^2X_3^2X_4^2X_6^2-X_2^2X_3^2X_6^2X_7^2\\
     &-X_1^2X_3^2X_5^2X_7^2-Z_1^1Z_3^1Z_1^2Z_3^2Z_4^2Z_6^2\\
     &-Z_2^1Z_3^1Z_2^2Z_3^2Z_6^2Z_7^2-Z_1^1Z_3^1Z_1^2Z_3^2Z_5^2Z_7^2.
\end{split}
\eeq
The maximum weight of any term is 6. This Hamiltonian is equal to the following form:
\beq
\begin{split}
H(T_3)=&-X_1^1X_3^1X_4^1X_6^1X_4^2X_6^2-X_2^1X_3^1X_6^1X_7^1X_6^2X_7^2\\
     &-X_1^1X_3^1X_5^1X_7^1X_5^2X_7^2-Z_1^1Z_3^1Z_4^1Z_6^1\\
     &-Z_2^1Z_3^1Z_6^1Z_7^1-Z_1^1Z_3^1Z_5^1Z_7^1\\
     &-X_1^2X_3^2X_4^2X_6^2-X_2^2X_3^2X_6^2X_7^2\\
     &-X_1^2X_3^2X_5^2X_7^2-Z_4^1Z_6^1Z_1^2Z_3^2Z_4^2Z_6^2\\
     &-Z_6^1Z_7^1Z_2^2Z_3^2Z_6^2Z_7^2-Z_5^1Z_7^1Z_1^2Z_3^2Z_5^2Z_7^2,
\end{split}
\eeq
which again has maximum weight of 6. Note that the transition between these two Hamiltonian representations can be arbitrarily fast, since they are equal.
Then, if we perform the remaining four transversal CNOTs by an adiabatic process, the final Hamiltonian will return to the initial one represented by $H(T)=-\sum_{j=1}^7(S_j^1+S_j^2)$. During the whole process, the maximum weight of terms in the Hamiltonian is reduced from 8 to 6. The weight does not increase when the codes are concatenated, since we just apply our scheme to the bottom level of concatenation and do higher levels operations in their usual way. So we can realize HQC fault-tolerantly with maximum Hamiltonian weight 6 while keeping the constant energy gap. For a small code like the Steane code, this may be the best we can do. For larger block codes, especially for topological codes like the surface code, it is very likely that the weight of the Hamiltonian terms during the adiabatic process can be well bounded.

\section{Summary and Conclusion}\label{sec:conclusion}
We have described a scheme for fault-tolerant HQC on a stabilizer code, which takes advantage of a constant energy gap during the process as well as of the intrinsic resilience of HQC. We've shown that from a fault-tolerant circuit without $\pi/8$ gates, we can systematically construct a fault-tolerant adiabatic process that implements the very same encoded unitary operation as the original circuit, with information encoded in the ground state of the system Hamiltonian. As long as we can realize holonomic versions of gates in the Clifford group, we can implement fault-tolerant universal quantum computation by using magic state distillation.

Holonomic single-qubit operation has been recently realized on trapped single $^{40}\text{Ca}^+$ ion system through adiabatic evolution~\cite{HQC_adiabatic_realization2013}. Theoretical work on non-adiabatic non-abelian HQC has also been proposed~\cite{NJPNonAdiabaticHolonomic}, and corresponding experiments have recently been realized in superconducting qubits~\cite{NonAbelianNonAdiabaticHolonomicNature} and NMR~\cite{FengGuanruPhysRevLett.110.190501}. Applying our strategy to actual physical systems will need certain techniques, like quantum gadgets~\cite{KempeQuantumGadget, Oliveira:2008:CQS:2016985.2016987} or the digital quantum simulator~\cite{RydbergsimulatorNatPhysics}, to build the many-body interactions. If the system Hamiltonian is built in one of these effective ways, rather than being fundamental, it may dramatically change the local error model we have assumed. This effect needs further investigation.

Our fault-tolerant HQC scheme differs from  adiabatic gate teleportation (AGT) \cite{Bacon-AGTPhysRevLett.103.120504}, and the scheme in Refs.~\cite{OgyanHolonomicQCPhysRevLett.102.070502,OgyanHolonomicQcPhysRevA.80.022325}, in the following ways: 1. Instead of focusing on single qubit unitary operations or two-qubit unitary operations, our scheme obtains holonomy through directly dragging the ground space (code space) of the system, whose path in the Grassmann manifold forms a closed loop. 2. During the adiabatic procedure, the energy spectrum of system basically remains the same, and there always exists an energy gap between the code space and the excited spaces, so that information is protected from low weight thermal excitation by an energy gap.

There are several advantages over other schemes here. We can reduce the low-weight error rate at the bottom level of code concatenation due to the existence of an energy gap, so that the frequency of error correction procedure can be greatly reduced at bottom physical level. The measurement of stabilizer generators and subsequent error correction can themselves introduce more errors, and this is one of the reasons why thresholds for current fault-tolerant schemes are so low. Moreover, the number of physical qubits needed in a fault-tolerant scheme is strongly dependent on the error rate at the physical level. Error rates substantially smaller than the threshold allow much smaller numbers of physical qubits. Eventually, our scheme may reduce the resource overhead needed to do fault-tolerant quantum computation.

On the other hand, our scheme seems naturally compatible with Hamiltonian-protected quantum memories~\cite{Wotton:2012QMemoryReview}, and has the potential to do fault-tolerant computation based on those kind of memories, especially those with self-correcting ability \cite{chesi2010self-correciton_memory, hutter2012self-correction_memory}.
No dynamical method seems capable of manipulating the topological degrees of freedom encoded in the ground space of these memory in the presence of a system Hamiltonian, as far as we know, since they would introduce terms that do not commute with the system Hamiltonian. However, our method of locally deforming these Hamiltonians could potentially do quantum computation on such systems. We conjecture that during this kind of local deformation procedure, the system will keep its self-correction capability in a thermal environment while quantum computation is implemented. This interesting topic requires further investigation, and may open a new way of studying quantum computater architecture.

We note that this method to construct fault-tolerant HQC is basically a serial procedure from gate to gate. For circuits with large depth, we could investigate the possibility of parallel operation. For large block codes or topological codes, such parallelization can be done, and is crucial in practice. We hope to apply our method to fault-tolerant schemes based on large block codes and topological codes, which may have higher thresholds than fault-tolerant schemes that concatenate small codes. Very likely, the maximum weight of the Hamiltonian terms used to describe topological codes during adiabatic evolution will be small and well bounded.

\section{ACKNOWLEGEMENTS}
We would like to  thank Daniel Lidar for discussion about the adiabatic theorem. Y.-C.Z \& T.A.B acknowledge support from NSF Grants No. EMT-0829870 and No. TF-0830801, and from the ARO MURI Grant W911NF-11-1-0268.


\begin{thebibliography}{10}%
\makeatletter
\providecommand \@ifxundefined [1]{%
 \ifx #1\undefined \expandafter \@firstoftwo
 \else \expandafter \@secondoftwo
\fi
}%
\providecommand \@ifnum [1]{%
 \ifnum #1\expandafter \@firstoftwo
 \else \expandafter \@secondoftwo
\fi
}%
\providecommand \enquote [1]{``#1''}%
\providecommand \bibnamefont  [1]{#1}%
\providecommand \bibfnamefont [1]{#1}%
\providecommand \citenamefont [1]{#1}%
\providecommand\href[0]{\@sanitize\@href}%
\providecommand\@href[1]{\endgroup\@@startlink{#1}\endgroup\@@href}%
\providecommand\@@href[1]{#1\@@endlink}%
\providecommand \@sanitize [0]{\begingroup\catcode`\&12\catcode`\#12\relax}%
\@ifxundefined \pdfoutput {\@firstoftwo}{%
 \@ifnum{\z@=\pdfoutput}{\@firstoftwo}{\@secondoftwo}%
}{%
 \providecommand\@@startlink[1]{\leavevmode}%
 \providecommand\@@endlink[0]{}%
}{%
 \providecommand\@@startlink[1]{%
  \leavevmode
  \pdfstartlink
   attr{/Border[0 0 1 ]/H/I/C[0 1 1]}%
   user{/Subtype/Link/A<</Type/Action/S/URI/URI(#1)>>}%
  \relax
 }%
 \providecommand\@@endlink[0]{\pdfendlink}%
}%
\providecommand \url  [0]{\begingroup\@sanitize \@url }%
\providecommand \@url [1]{\endgroup\@href {#1}{\urlprefix}}%
\providecommand \urlprefix [0]{URL }%
\providecommand \Eprint[0]{\href }%
\@ifxundefined \urlstyle {%
  \providecommand \doi [1]{doi:\discretionary{}{}{}#1}%
}{%
  \providecommand \doi [0]{doi:\discretionary{}{}{}\begingroup
  \urlstyle{rm}\Url }%
}%
\providecommand \doibase [0]{http://dx.doi.org/}%
\providecommand \Doi[1]{\href{\doibase#1}}%
\providecommand \bibAnnote [3]{%
  \BibitemShut{#1}%
  \begin{quotation}\noindent
    \textsc{Key:}\ #2\\\textsc{Annotation:}\ #3%
  \end{quotation}%
}%
\providecommand \bibAnnoteFile [2]{%
  \IfFileExists{#2}{\bibAnnote {#1} {#2} {\input{#2}}}{}%
}%
\providecommand \typeout [0]{\immediate \write \m@ne }%
\providecommand \selectlanguage [0]{\@gobble}%
\providecommand \bibinfo [0]{\@secondoftwo}%
\providecommand \bibfield [0]{\@secondoftwo}%
\providecommand \translation [1]{[#1]}%
\providecommand \BibitemOpen[0]{}%
\providecommand \bibitemStop [0]{}%
\providecommand \bibitemNoStop [0]{.\EOS\space}%
\providecommand \EOS [0]{\spacefactor3000\relax}%
\providecommand \BibitemShut [1]{\csname bibitem#1\endcsname}%
\bibitem{Nielsen:2000:CambridgeUniversityPress}%
  \BibitemOpen
  \bibfield{author}{%
  \bibinfo {author} {\bibfnamefont{M.~A.}\ \bibnamefont{Nielsen}}\ and\
  \bibinfo {author} {\bibfnamefont{I.~L.}\ \bibnamefont{Chuang}},\ }%
  \emph{\bibinfo {title} {Quantum Computation and Quantum Information}}\
  (\bibinfo {publisher} {Cambridge University Press},\ \bibinfo {address}
  {Cambridge},\ \bibinfo {year} {2000})%
  \bibAnnoteFile{NoStop}{Nielsen:2000:CambridgeUniversityPress}%
\bibitem{DivencenzoFTPhysRevLett.77.3260}%
  \BibitemOpen
  \bibfield{author}{%
  \bibinfo {author} {\bibfnamefont{D.~P.}\ \bibnamefont{DiVincenzo}}\ and\
  \bibinfo {author} {\bibfnamefont{P.~W.}\ \bibnamefont{Shor}},\ }%
  \bibfield{journal}{%
  \bibinfo {journal} {Phys. Rev. Lett.}\ }%
  \textbf{\bibinfo {volume} {77}},\ \bibinfo {pages} {3260} (\bibinfo {year}
  {1996})%
  \bibAnnoteFile{NoStop}{DivencenzoFTPhysRevLett.77.3260}%
\bibitem{Kitaev:2003:2}%
  \BibitemOpen
  \bibfield{author}{%
  \bibinfo {author} {\bibfnamefont{A.}~\bibnamefont{Kitaev}},\ }%
  \bibfield{journal}{%
  \bibinfo {journal} {Ann. of Phys.}\ }%
  \textbf{\bibinfo {volume} {303}},\ \bibinfo {pages} {2} (\bibinfo {year}
  {2003})%
  \bibAnnoteFile{NoStop}{Kitaev:2003:2}%
\bibitem{QECbook:2013}%
  \BibitemOpen
  \bibfield{author}{%
  \bibinfo {author} {\bibfnamefont{D.}~\bibnamefont{Lidar}}\ and\ \bibinfo
  {author} {\bibfnamefont{T.}~\bibnamefont{Brun}},\ }%
  \emph{\bibinfo {title} {Quantum Error Correction}}\ (\bibinfo {publisher}
  {Cambridge University Press, Cambridge},\ \bibinfo {year} {2013})%
  \bibAnnoteFile{NoStop}{QECbook:2013}%
\bibitem{Zanardi:1999:94}%
  \BibitemOpen
  \bibfield{author}{%
  \bibinfo {author} {\bibfnamefont{P.}~\bibnamefont{Zanardi}}\ and\ \bibinfo
  {author} {\bibfnamefont{M.}~\bibnamefont{Rasetti}},\ }%
  \bibfield{journal}{%
  \bibinfo {journal} {Phys. Lett. A}\ }%
  \textbf{\bibinfo {volume} {264}},\ \bibinfo {pages} {94} (\bibinfo {year}
  {1999})%
  \bibAnnoteFile{NoStop}{Zanardi:1999:94}%
\bibitem{Wilczek:1984:2111}%
  \BibitemOpen
  \bibfield{author}{%
  \bibinfo {author} {\bibfnamefont{F.}~\bibnamefont{Wilczek}}\ and\ \bibinfo
  {author} {\bibfnamefont{A.}~\bibnamefont{Zee}},\ }%
  \bibfield{journal}{%
  \bibinfo {journal} {Phys. Rev. Lett.}\ }%
  \textbf{\bibinfo {volume} {52}},\ \bibinfo {pages} {2111} (\bibinfo {year}
  {1984})%
  \bibAnnoteFile{NoStop}{Wilczek:1984:2111}%
\bibitem{Solina:robustofHQCPhysRevA.70.042316}%
  \BibitemOpen
  \bibfield{author}{%
  \bibinfo {author} {\bibfnamefont{P.}~\bibnamefont{Solinas}}, \bibinfo
  {author} {\bibfnamefont{P.}~\bibnamefont{Zanardi}},\ and\ \bibinfo {author}
  {\bibfnamefont{N.}~\bibnamefont{Zangh\`\i}},\ }%
  \bibfield{journal}{%
  \bibinfo {journal} {Phys. Rev. A}\ }%
  \textbf{\bibinfo {volume} {70}},\ \bibinfo {pages} {042316} (\bibinfo {year}
  {2004})%
  \bibAnnoteFile{NoStop}{Solina:robustofHQCPhysRevA.70.042316}%
\bibitem{Gurdi:robustofHQCPhysRevLett.94.020503}%
  \BibitemOpen
  \bibfield{author}{%
  \bibinfo {author} {\bibfnamefont{I.}~\bibnamefont{Fuentes-Guridi}}, \bibinfo
  {author} {\bibfnamefont{F.}~\bibnamefont{Girelli}},\ and\ \bibinfo {author}
  {\bibfnamefont{E.}~\bibnamefont{Livine}},\ }%
  \bibfield{journal}{%
  \bibinfo {journal} {Phys. Rev. Lett.}\ }%
  \textbf{\bibinfo {volume} {94}},\ \bibinfo {pages} {020503} (\bibinfo {year}
  {2005})%
  \bibAnnoteFile{NoStop}{Gurdi:robustofHQCPhysRevLett.94.020503}%
\bibitem{solinas2012stability_HQC}%
  \BibitemOpen
  \bibfield{author}{%
  \bibinfo {author} {\bibfnamefont{P.}~\bibnamefont{Solinas}}, \bibinfo
  {author} {\bibfnamefont{M.}~\bibnamefont{Sassetti}}, \bibinfo {author}
  {\bibfnamefont{P.}~\bibnamefont{Truini}},\ and\ \bibinfo {author}
  {\bibfnamefont{N.}~\bibnamefont{Zangh{\`\i}}},\ }%
  \bibfield{journal}{%
  \bibinfo {journal} {New. J. Phys}\ }%
  \textbf{\bibinfo {volume} {14}},\ \bibinfo {pages} {093006} (\bibinfo {year}
  {2012})%
  \bibAnnoteFile{NoStop}{solinas2012stability_HQC}%
\bibitem{Farhi:0001106}%
  \BibitemOpen
  \bibfield{author}{%
  \bibinfo {author} {\bibfnamefont{E.}~\bibnamefont{Farhi}}, \bibinfo {author}
  {\bibfnamefont{J.}~\bibnamefont{Goldstone}}, \bibinfo {author}
  {\bibfnamefont{S.}~\bibnamefont{Gutmann}},\ and\ \bibinfo {author}
  {\bibfnamefont{M.}~\bibnamefont{Sipser}},\ }%
  \enquote{\bibinfo {title} {Quantum computation by adiabatic evolution},}\
  (\bibinfo {year} {2000}),\ \bibinfo {note} {eprint arXiv:quant-ph/0001106}%
  \bibAnnoteFile{NoStop}{Farhi:0001106}%
\bibitem{FarhiScience}%
  \BibitemOpen
  \bibfield{author}{%
  \bibinfo {author} {\bibfnamefont{E.}~\bibnamefont{Farhi}}, \bibinfo {author}
  {\bibfnamefont{J.}~\bibnamefont{Goldstone}}, \bibinfo {author}
  {\bibfnamefont{S.}~\bibnamefont{Gutmann}}, \bibinfo {author}
  {\bibfnamefont{J.}~\bibnamefont{Lapan}}, \bibinfo {author}
  {\bibfnamefont{A.}~\bibnamefont{Lundgren}},\ and\ \bibinfo {author}
  {\bibfnamefont{D.}~\bibnamefont{Preda}},\ }%
  \bibfield{journal}{%
  \bibinfo {journal} {Science}\ }%
  \textbf{\bibinfo {volume} {292}},\ \bibinfo {pages} {472} (\bibinfo {year}
  {2001})%
  \bibAnnoteFile{NoStop}{FarhiScience}%
\bibitem{Jordan:2005:052322}%
  \BibitemOpen
  \bibfield{author}{%
  \bibinfo {author} {\bibfnamefont{S.}~\bibnamefont{Jordan}}, \bibinfo {author}
  {\bibfnamefont{E.}~\bibnamefont{Farhi}},\ and\ \bibinfo {author}
  {\bibfnamefont{P.}~\bibnamefont{Shor}},\ }%
  \bibfield{journal}{%
  \bibinfo {journal} {Phys. Rev. A}\ }%
  \textbf{\bibinfo {volume} {74}},\ \bibinfo {pages} {052322} (\bibinfo {year}
  {2006})%
  \bibAnnoteFile{NoStop}{Jordan:2005:052322}%
\bibitem{TameemNJP:adiabaticMarkovianME}%
  \BibitemOpen
  \bibfield{author}{%
  \bibinfo {author} {\bibfnamefont{T.}~\bibnamefont{Albash}}, \bibinfo {author}
  {\bibfnamefont{S.}~\bibnamefont{Boixo}}, \bibinfo {author}
  {\bibfnamefont{D.~A.}\ \bibnamefont{Lidar}},\ and\ \bibinfo {author}
  {\bibfnamefont{P.}~\bibnamefont{Zanardi}},\ }%
  \bibfield{journal}{%
  \bibinfo {journal} {New. J. Phys.}\ }%
  \textbf{\bibinfo {volume} {14}},\ \bibinfo {pages} {123016} (\bibinfo {year}
  {2012})%
  \bibAnnoteFile{NoStop}{TameemNJP:adiabaticMarkovianME}%
\bibitem{OgyanHolonomicQCPhysRevLett.102.070502}%
  \BibitemOpen
  \bibfield{author}{%
  \bibinfo {author} {\bibfnamefont{O.}~\bibnamefont{Oreshkov}}, \bibinfo
  {author} {\bibfnamefont{T.~A.}\ \bibnamefont{Brun}},\ and\ \bibinfo {author}
  {\bibfnamefont{D.~A.}\ \bibnamefont{Lidar}},\ }%
  \bibfield{journal}{%
  \bibinfo {journal} {Phys. Rev. Lett.}\ }%
  \textbf{\bibinfo {volume} {102}},\ \bibinfo {pages} {070502} (\bibinfo {year}
  {2009})%
  \bibAnnoteFile{NoStop}{OgyanHolonomicQCPhysRevLett.102.070502}%
\bibitem{OgyanHolonomicQcPhysRevA.80.022325}%
  \BibitemOpen
  \bibfield{author}{%
  \bibinfo {author} {\bibfnamefont{O.}~\bibnamefont{Oreshkov}}, \bibinfo
  {author} {\bibfnamefont{T.~A.}\ \bibnamefont{Brun}},\ and\ \bibinfo {author}
  {\bibfnamefont{D.~A.}\ \bibnamefont{Lidar}},\ }%
  \bibfield{journal}{%
  \bibinfo {journal} {Phys. Rev. A}\ }%
  \textbf{\bibinfo {volume} {80}},\ \bibinfo {pages} {022325} (\bibinfo {year}
  {2009})%
  \bibAnnoteFile{NoStop}{OgyanHolonomicQcPhysRevA.80.022325}%
\bibitem{LidarTowardFTAdqcPhysRevLett.100.160506}%
  \BibitemOpen
  \bibfield{author}{%
  \bibinfo {author} {\bibfnamefont{D.~A.}\ \bibnamefont{Lidar}},\ }%
  \bibfield{journal}{%
  \bibinfo {journal} {Phys. Rev. Lett.}\ }%
  \textbf{\bibinfo {volume} {100}},\ \bibinfo {pages} {160506} (\bibinfo {year}
  {2008})%
  \bibAnnoteFile{NoStop}{LidarTowardFTAdqcPhysRevLett.100.160506}%
\bibitem{StewartEquivalenceADCPhysRevA.71.062314}%
  \BibitemOpen
  \bibfield{author}{%
  \bibinfo {author} {\bibfnamefont{M.~S.}\ \bibnamefont{Siu}},\ }%
  \bibfield{journal}{%
  \bibinfo {journal} {Phys. Rev. A}\ }%
  \textbf{\bibinfo {volume} {71}},\ \bibinfo {pages} {062314} (\bibinfo {year}
  {2005})%
  \bibAnnoteFile{NoStop}{StewartEquivalenceADCPhysRevA.71.062314}%
\bibitem{AriEquivalenceADCPhysRevLett.99.070502}%
  \BibitemOpen
  \bibfield{author}{%
  \bibinfo {author} {\bibfnamefont{A.}~\bibnamefont{Mizel}}, \bibinfo {author}
  {\bibfnamefont{D.~A.}\ \bibnamefont{Lidar}},\ and\ \bibinfo {author}
  {\bibfnamefont{M.}~\bibnamefont{Mitchell}},\ }%
  \bibfield{journal}{%
  \bibinfo {journal} {Phys. Rev. Lett.}\ }%
  \textbf{\bibinfo {volume} {99}},\ \bibinfo {pages} {070502} (\bibinfo {year}
  {2007})%
  \bibAnnoteFile{NoStop}{AriEquivalenceADCPhysRevLett.99.070502}%
\bibitem{Tanimura2004199}%
  \BibitemOpen
  \bibfield{author}{%
  \bibinfo {author} {\bibfnamefont{S.}~\bibnamefont{Tanimura}}, \bibinfo
  {author} {\bibfnamefont{D.}~\bibnamefont{Hayashi}},\ and\ \bibinfo {author}
  {\bibfnamefont{M.}~\bibnamefont{Nakahara}},\ }%
  \bibfield{journal}{%
  \bibinfo {journal} {Phys, Lett. A}\ }%
  \textbf{\bibinfo {volume} {325}},\ \bibinfo {pages} {199 } (\bibinfo {year}
  {2004})%
  \bibAnnoteFile{NoStop}{Tanimura2004199}%
\bibitem{tanimura:022101}%
  \BibitemOpen
  \bibfield{author}{%
  \bibinfo {author} {\bibfnamefont{S.}~\bibnamefont{Tanimura}}, \bibinfo
  {author} {\bibfnamefont{M.}~\bibnamefont{Nakahara}},\ and\ \bibinfo {author}
  {\bibfnamefont{D.}~\bibnamefont{Hayashi}},\ }%
  \bibfield{journal}{%
  \bibinfo {journal} {J. Math. Phys.}\ }%
  \textbf{\bibinfo {volume} {46}},\ \bibinfo {eid} {022101} (\bibinfo {year}
  {2005})%
  \bibAnnoteFile{NoStop}{tanimura:022101}%
\bibitem{Nakahara:2003:IOP}%
  \BibitemOpen
  \bibfield{author}{%
  \bibinfo {author} {\bibfnamefont{M.}~\bibnamefont{Nakahara}},\ }%
  \emph{\bibinfo {title} {Geometry, Topology and Physics}},\ \bibinfo {edition}
  {2nd}\ ed.\ (\bibinfo {publisher} {Institute of Physics Publishing},\
  \bibinfo {year} {2003})%
  \bibAnnoteFile{NoStop}{Nakahara:2003:IOP}%
\bibitem{Gottesman:9705052}%
  \BibitemOpen
  \bibfield{author}{%
  \bibinfo {author} {\bibfnamefont{D.}~\bibnamefont{{Gottesman}}},\ }%
  \emph{\bibinfo {title} {Stabilizer codes and quantum error correction}},\
  Ph.D. thesis,\ \bibinfo {school} {California Institute of Technology}
  (\bibinfo {year} {1997}),\ \bibinfo {note} {eprint arXiv:quant-ph/9705052}%
  \bibAnnoteFile{NoStop}{Gottesman:9705052}%
\bibitem{KnillFTNature}%
  \BibitemOpen
  \bibfield{author}{%
  \bibinfo {author} {\bibfnamefont{E.}~\bibnamefont{Knill}},\ }%
  \bibfield{journal}{%
  \bibinfo {journal} {Nature (London)}\ }%
  \textbf{\bibinfo {volume} {434}},\ \bibinfo {pages} {39} (\bibinfo {year}
  {2005})%
  \bibAnnoteFile{NoStop}{KnillFTNature}%
\bibitem{Folwer2012PhysRevA.86.032324}%
  \BibitemOpen
  \bibfield{author}{%
  \bibinfo {author} {\bibfnamefont{A.~G.}\ \bibnamefont{Fowler}}, \bibinfo
  {author} {\bibfnamefont{M.}~\bibnamefont{Mariantoni}}, \bibinfo {author}
  {\bibfnamefont{J.~M.}\ \bibnamefont{Martinis}},\ and\ \bibinfo {author}
  {\bibfnamefont{A.~N.}\ \bibnamefont{Cleland}},\ }%
  \bibfield{journal}{%
  \bibinfo {journal} {Phys. Rev. A}\ }%
  \textbf{\bibinfo {volume} {86}},\ \bibinfo {pages} {032324} (\bibinfo {year}
  {2012})%
  \bibAnnoteFile{NoStop}{Folwer2012PhysRevA.86.032324}%
\bibitem{Messiah:1965:North}%
  \BibitemOpen
  \bibfield{author}{%
  \bibinfo {author} {\bibfnamefont{A.}~\bibnamefont{Messiah}},\ }%
  \emph{\bibinfo {title} {Quantum Mechanics, Vol. II}}\ (\bibinfo {publisher}
  {North-Holland Publishing Co.},\ \bibinfo {address} {Amsterdam},\ \bibinfo
  {year} {1965})%
  \bibAnnoteFile{NoStop}{Messiah:1965:North}%
\bibitem{Hagedorn:2002:235}%
  \BibitemOpen
  \bibfield{author}{%
  \bibinfo {author} {\bibfnamefont{G.~A.}\ \bibnamefont{Hagedorn}}\ and\
  \bibinfo {author} {\bibfnamefont{A.}~\bibnamefont{Joye}},\ }%
  \bibfield{journal}{%
  \bibinfo {journal} {J. Math. Anal. and Appl.}\ }%
  \textbf{\bibinfo {volume} {267}},\ \bibinfo {pages} {235} (\bibinfo {year}
  {2002})%
  \bibAnnoteFile{NoStop}{Hagedorn:2002:235}%
\bibitem{lidarAdiabaticaccuracy:102106}%
  \BibitemOpen
  \bibfield{author}{%
  \bibinfo {author} {\bibfnamefont{D.~A.}\ \bibnamefont{Lidar}}, \bibinfo
  {author} {\bibfnamefont{A.~T.}\ \bibnamefont{Rezakhani}},\ and\ \bibinfo
  {author} {\bibfnamefont{A.}~\bibnamefont{Hamma}},\ }%
  \bibfield{journal}{%
  \bibinfo {journal} {J. Math. Phys.}\ }%
  \textbf{\bibinfo {volume} {50}},\ \bibinfo {eid} {102106} (\bibinfo {year}
  {2009})%
  \bibAnnoteFile{NoStop}{lidarAdiabaticaccuracy:102106}%
\bibitem{RydbergsimulatorNatPhysics}%
  \BibitemOpen
  \bibfield{author}{%
  \bibinfo {author} {\bibfnamefont{H.}~\bibnamefont{Weimer}}, \bibinfo {author}
  {\bibfnamefont{M.}~\bibnamefont{Muller}}, \bibinfo {author}
  {\bibfnamefont{I.}~\bibnamefont{Lesanovsky}}, \bibinfo {author}
  {\bibfnamefont{P.}~\bibnamefont{Zoller}},\ and\ \bibinfo {author}
  {\bibfnamefont{H.~P.}\ \bibnamefont{Buchler}},\ }%
  \bibfield{journal}{%
  \bibinfo {journal} {Nat. Phys.}\ }%
  \textbf{\bibinfo {volume} {6}},\ \bibinfo {pages} {382} (\bibinfo {year}
  {2010})%
  \bibAnnoteFile{NoStop}{RydbergsimulatorNatPhysics}%
\bibitem{KempeQuantumGadget}%
  \BibitemOpen
  \bibfield{author}{%
  \bibinfo {author} {\bibfnamefont{J.}~\bibnamefont{Kempe}}, \bibinfo {author}
  {\bibfnamefont{A.}~\bibnamefont{Kitaev}},\ and\ \bibinfo {author}
  {\bibfnamefont{O.}~\bibnamefont{Regev}},\ }%
  \bibfield{journal}{%
  \bibinfo {journal} {SIAM J. Comput.}\ }%
  \textbf{\bibinfo {volume} {35}},\ \bibinfo {pages} {1070} (\bibinfo {year}
  {2006})%
  \bibAnnoteFile{NoStop}{KempeQuantumGadget}%
\bibitem{Oliveira:2008:CQS:2016985.2016987}%
  \BibitemOpen
  \bibfield{author}{%
  \bibinfo {author} {\bibfnamefont{R.}~\bibnamefont{Oliveira}}\ and\ \bibinfo
  {author} {\bibfnamefont{B.~M.}\ \bibnamefont{Terhal}},\ }%
  \bibfield{journal}{%
  \bibinfo {journal} {Quantum Info. Comput.}\ }%
  \textbf{\bibinfo {volume} {8}},\ \bibinfo {pages} {900} (\bibinfo {year}
  {2008})%
  \bibAnnoteFile{NoStop}{Oliveira:2008:CQS:2016985.2016987}%
\bibitem{HQC_adiabatic_realization2013}%
  \BibitemOpen
  \bibfield{author}{%
  \bibinfo {author} {\bibfnamefont{K.}~\bibnamefont{Toyoda}}, \bibinfo {author}
  {\bibfnamefont{K.}~\bibnamefont{Uchida}}, \bibinfo {author}
  {\bibfnamefont{A.}~\bibnamefont{Noguchi}}, \bibinfo {author}
  {\bibfnamefont{S.}~\bibnamefont{Haze}},\ and\ \bibinfo {author}
  {\bibfnamefont{S.}~\bibnamefont{Urabe}},\ }%
  \bibfield{journal}{%
  \bibinfo {journal} {Phys. Rev. A}\ }%
  \textbf{\bibinfo {volume} {87}},\ \bibinfo {pages} {052307} (\bibinfo {year}
  {2013})%
  \bibAnnoteFile{NoStop}{HQC_adiabatic_realization2013}%
\bibitem{NJPNonAdiabaticHolonomic}%
  \BibitemOpen
  \bibfield{author}{%
  \bibinfo {author} {\bibfnamefont{E.}~\bibnamefont{Sj\"{o}vist}}, \bibinfo
  {author} {\bibfnamefont{D.~M.}\ \bibnamefont{Tong}}, \bibinfo {author}
  {\bibfnamefont{L.~M.}\ \bibnamefont{Andersson}}, \bibinfo {author}
  {\bibfnamefont{B.}~\bibnamefont{Hessmo}}, \bibinfo {author}
  {\bibfnamefont{M.}~\bibnamefont{Johansson}},\ and\ \bibinfo {author}
  {\bibfnamefont{K.}~\bibnamefont{Singh}},\ }%
  \bibfield{journal}{%
  \bibinfo {journal} {New J. Phys.}\ }%
  \textbf{\bibinfo {volume} {14}},\ \bibinfo {pages} {103035} (\bibinfo {year}
  {2012})%
  \bibAnnoteFile{NoStop}{NJPNonAdiabaticHolonomic}%
\bibitem{NonAbelianNonAdiabaticHolonomicNature}%
  \BibitemOpen
  \bibfield{author}{%
  \bibinfo {author} {\bibfnamefont{A.~A.}\ \bibnamefont{Abdumalikov~Jr}},
  \bibinfo {author} {\bibfnamefont{J.}~\bibnamefont{Fink}}, \bibinfo {author}
  {\bibnamefont{Juliusson}}, \bibinfo {author} {\bibnamefont{K.}}, \bibinfo
  {author} {\bibfnamefont{M.}~\bibnamefont{Pechal}}, \bibinfo {author}
  {\bibfnamefont{S.}~\bibnamefont{Berger}}, \bibinfo {author}
  {\bibfnamefont{A.}~\bibnamefont{Wallraff}},\ and\ \bibinfo {author}
  {\bibfnamefont{S.}~\bibnamefont{Filipp}},\ }%
  \bibfield{journal}{%
  \bibinfo {journal} {Nature (London)}\ }%
  \textbf{\bibinfo {volume} {496}},\ \bibinfo {pages} {482} (\bibinfo {year}
  {2013})%
  \bibAnnoteFile{NoStop}{NonAbelianNonAdiabaticHolonomicNature}%
\bibitem{FengGuanruPhysRevLett.110.190501}%
  \BibitemOpen
  \bibfield{author}{%
  \bibinfo {author} {\bibfnamefont{G.}~\bibnamefont{Feng}}, \bibinfo {author}
  {\bibfnamefont{G.}~\bibnamefont{Xu}},\ and\ \bibinfo {author}
  {\bibfnamefont{G.}~\bibnamefont{Long}},\ }%
  \bibfield{journal}{%
  \bibinfo {journal} {Phys. Rev. Lett.}\ }%
  \textbf{\bibinfo {volume} {110}},\ \bibinfo {pages} {190501} (\bibinfo {year}
  {2013})%
  \bibAnnoteFile{NoStop}{FengGuanruPhysRevLett.110.190501}%
\bibitem{Bacon-AGTPhysRevLett.103.120504}%
  \BibitemOpen
  \bibfield{author}{%
  \bibinfo {author} {\bibfnamefont{D.}~\bibnamefont{Bacon}}\ and\ \bibinfo
  {author} {\bibfnamefont{S.~T.}\ \bibnamefont{Flammia}},\ }%
  \bibfield{journal}{%
  \bibinfo {journal} {Phys. Rev. Lett.}\ }%
  \textbf{\bibinfo {volume} {103}},\ \bibinfo {pages} {120504} (\bibinfo {year}
  {2009})%
  \bibAnnoteFile{NoStop}{Bacon-AGTPhysRevLett.103.120504}%
\bibitem{Wotton:2012QMemoryReview}%
  \BibitemOpen
  \bibfield{author}{%
  \bibinfo {author} {\bibfnamefont{J.~R.}\ \bibnamefont{Wootton}},\ }%
  \bibfield{journal}{%
  \bibinfo {journal} {J. Mod. Opt.}\ }%
  \textbf{\bibinfo {volume} {59}},\ \bibinfo {pages} {1717} (\bibinfo {year}
  {2012})%
  \bibAnnoteFile{NoStop}{Wotton:2012QMemoryReview}%
\bibitem{chesi2010self-correciton_memory}%
  \BibitemOpen
  \bibfield{author}{%
  \bibinfo {author} {\bibfnamefont{S.}~\bibnamefont{Chesi}}, \bibinfo {author}
  {\bibfnamefont{B.}~\bibnamefont{R{\"o}thlisberger}},\ and\ \bibinfo {author}
  {\bibfnamefont{D.}~\bibnamefont{Loss}},\ }%
  \bibfield{journal}{%
  \bibinfo {journal} {Phy. Rev. A}\ }%
  \textbf{\bibinfo {volume} {82}},\ \bibinfo {pages} {022305} (\bibinfo {year}
  {2010})%
  \bibAnnoteFile{NoStop}{chesi2010self-correciton_memory}%
\bibitem{hutter2012self-correction_memory}%
  \BibitemOpen
  \bibfield{author}{%
  \bibinfo {author} {\bibfnamefont{A.}~\bibnamefont{Hutter}}, \bibinfo {author}
  {\bibfnamefont{J.~R.}\ \bibnamefont{Wootton}}, \bibinfo {author}
  {\bibfnamefont{B.}~\bibnamefont{R{\"o}thlisberger}},\ and\ \bibinfo {author}
  {\bibfnamefont{D.}~\bibnamefont{Loss}},\ }%
  \bibfield{journal}{%
  \bibinfo {journal} {Phy. Rev. A}\ }%
  \textbf{\bibinfo {volume} {86}},\ \bibinfo {pages} {052340} (\bibinfo {year}
  {2012})%
  \bibAnnoteFile{NoStop}{hutter2012self-correction_memory}%
\end{thebibliography}
\end{document}